\newtheorem{defin}{Definition}[section]
\newtheorem{theorem}{Theorem}[section]
\newtheorem{lemma}[theorem]{Lemma}
\newtheorem{prob}{Problem}
\newcommand{\D}{{\cal D}}
\newcommand{\G}{{{\cal G}}}
\newcommand{\calF}{{\cal F}}
\newcommand{\E}{\mathbb E}
\newcommand{\cP}{\mathbb P}
\newcommand{\cH}{{\cal H}}
\newcommand{\bigO}{{ O}}
\newcommand{\bigT}{{\Theta}}
\newcommand{\ER}{Erd\"{o}s-R\'enyi }
\newcommand{\quer }{{\textsc{queried}}}
\newcommand{\marked }{{\textsc {marked}}}
\newcommand{\stared }{{\textsc {starred}}}
\newcommand{\obs }{{\textsc {unmasked}}}
\DeclareMathOperator{\tmix}{t_{mix}}
\DeclareMathOperator{\davg}{d_{avg}}
\DeclareMathOperator{\davgs}{d^2_{avg}}
\DeclareMathOperator{\deff}{d_{eff}}
\date{}
\title{On the Complexity of Sampling Vertices Uniformly from a Graph}
\author{Flavio Chierichetti\footnote{Supported in part by the ERC Starting Grant DMAP 680153, by a Google Focused Research Award and by the SIR Grant RBSI14Q743.}\\
Dipartimento di Informatica\\
Sapienza University \\
Rome, Italy \\
{\tt  flavio@di.uniroma1.it}
\and
Shahrzad Haddadan\footnote{Supported in part by the ERC Starting Grant DMAP 680153, by a Google Focused Research Award and by the SIR Grant RBSI14Q743.}\\
Dipartimento di Informatica\\
Sapienza University \\
Rome, Italy \\
{\tt  shahrzad.haddadan@uniroma1.it}
}
\newcommand{\arx}[1]{\href{http://arxiv.org/abs/#1}{\texttt{arXiv:#1}}}
\begin{document}
\maketitle

\begin{abstract}
We study a number of graph exploration problems in the following natural scenario: an algorithm starts  exploring an undirected graph from some {\em seed} vertex; the algorithm, for an arbitrary vertex $v$ that it is aware of, can ask an oracle to return the set of the  neighbors of $v$. (In the case of social networks, a call to this oracle corresponds to downloading the  profile page of user $v$.) The goal of the algorithm is to either learn something (e.g., average degree) about the graph, or to return some random function of the graph (e.g., a uniform-at-random vertex), while accessing/downloading as few vertices of the graph as possible.

Motivated by practical applications, we study the complexities  of a variety of problems in terms of the graph's mixing time $\tmix$ and average degree $\davg$ --- two measures that are believed to be quite small in real-world social networks, and  that have often been used in the applied literature to bound the performance of online exploration algorithms.

Our main result is that the algorithm has to access  $\Omega\left(\tmix \davg \epsilon^{-2} \ln \delta^{-1}\right)$ vertices to obtain, with probability at least $1-\delta$, an $\epsilon$ additive approximation of the average of a bounded function on the vertices of a graph --- this lower bound  matches the performance of an algorithm that was  proposed in the literature.

We also give tight bounds for the problem of returning a close-to-uniform-at-random vertex from the graph. Finally, we give lower bounds for the problems of estimating the average degree of the graph, and the number of vertices of the graph.

\end{abstract}

\section{Introduction}

Hundreds of millions of people share messages, videos and pictures on Google+ and Facebook each day --- 
 these  media have an increasingly high political, economical, and social importance in today's world. Data miners have consequently devoted significant amounts of attention to the study of large social networks.

\smallskip

In data mining, one often seeks algorithms that can return (approximate) properties of online social networks, so to study and analyze them, but without having to download the millions, or billions, of vertices that they are made up of.
The properties of interest range from the order of the graph \cite{Katzir, LowSize}, to its average degree (or its degree distribution) \cite{KumarAvg,AvgFeige, AvgGoldreich}, to the average clustering coefficient \cite{ Clustering, Seshadhri14} or triangle counting \cite{Becchetti:2010}, to non-topological properties such as the average score that the social network's users assign to a movie or a song, or to the fraction of people that {\em like} a specific article or page. All these problems have trivial solutions when the graph (with its non-topological attributes) is stored in main memory, or in the disk: choosing a few independent and uniform at random vertices from the graph, and computing  their contribution  to the (additive) property of interest, is sufficient to estimate the (unknown) value of the graph property --- the empirical average of the contributions of the randomly chosen vertices will be close to the right value with high probability, by the central limit theorem.

\smallskip

In applications, though, it is often impossible to have random access to the (vertices of the) graph. Consider, for instance, an online undirected social graph, such as the Facebook friendship graph. An algorithm can {\em download} a webpage of a given (known) user {\tt alice} from this social graph (e.g., {\tt http://sn.com/user.php?id=alice}), parse the HTML, and get the URLs of the pages of her friends (e.g., {\tt http://sn.com/user.php?id=bob}, {\tt http://sn.com/user.php?id=charles}, etc.) and that user's non-topological attributes (e.g., the set of  movies she likes) --- an algorithm, though, cannot download the webpage of a vertex without  knowing its URL: thus, to download a generic vertex {\tt zoe} from the graph, the algorithm first needs to download all the vertices in (at least) one path from the seed vertex (e.g., {\tt alice}) to {\tt zoe}.

\smallskip

Clearly, given enough many resources, the algorithm could crawl the whole social network (that is, download each of the social network's vertices), and then reduce the problem of computing the online graph property to the centralized one --- unfortunately, it is practically infeasible to download millions, or billions, of vertices from a social network (the APIs that can be used to access the network usually enforce strict limits on how many vertices can be downloaded per day). Several techniques have been proposed in the literature for studying properties of online graphs --- almost all of them assume to have access to a random oracle that returns a random vertex of the graph according to a certain distribution (usually, either uniform, or proportional to the degree), e.g., \cite{Bressan, numedges, Katzir, KumarAvg, AvgGoldreich}.

When running algorithms on online social networks, it is often hard or impossible to implement a uniform-at-random random oracle, and to get samples out of it --- the complexity of this oracle is one of the main problems that we tackle in this paper.

\smallskip

In practice, an algorithm is given (the URL of) a seed vertex (or, some seed vertices) of the social network; the algorithm has to download that seed vertex, get the URLs of its neighbors, and then decide which of them to download next; after having downloaded the second node, the algorithm (might) learn of the existence of some other URLs/vertices, and can then decide which of the known (but unexplored) URLs to download --- and so on, and so forth, until the algorithm can return a good approximation of the property to be guessed.

\smallskip

The natural cost function in this setting is the (random) number of vertices that the algorithm has to download, or {\em query}, before making its guess --- the cost function is usually bounded in terms of properties of the graph (e.g., its order, its average degree, etc.), and in terms of the quality of the algorithm's guess.

\smallskip

Many problems of this form can be found in the literature. In this paper, we consider two natural problems, that are at the heart of many others, and whose complexity (as far as we know) was open before this work:\begin{itemize}
\item the ``average score problem'': assuming that each vertex holds some score in $[0,1]$,  compute an approximation of the average of the scores;
\item the ``uniform at random sample'': return a random vertex from the graph whose distribution is approximately uniform.
\end{itemize}
In a sense, the latter problem is technically more interesting than the former (a solution to the latter provides a solution to the former). In practice, though, the average score problem is much more significant (and ubiquitous), given its many applications \cite{Dasgupta12, Ahn07, Gjoka:2010,Mislove07} (e.g., computing the favorability rating of a candidate, or the average star-rating of a movie).

\smallskip

A number of algorithms have been proposed for  the uniform-at-random sample problem \cite{FlavioKumar,Katzir} --- the best known algorithms require roughly $\tilde{O}\left(\tmix \cdot \davg\right)$ vertex queries/downloads to return a vertex whose distribution is (close to) uniform at random, where $\tmix$ is the mixing time of the lazy random walk on the graph, and $\davg$ is its average degree\footnote{Real-world social networks are known to have a small average degree $\davg$; their mixing time $\tmix$ has been observed \cite{Leskovec09}  to be quite small, as well.}. These algorithms do not use any knowledge of the average degree of the graph $\davg$, but need to know a constant approximation of its mixing time $\tmix$.
To our knowledge, the best  lower bound for the uniform-at-random sample problem before this work,  was $\Omega(\tmix + \davg)$ \cite{FlavioKumar} --- one of the main results of this paper is (i) an almost tight lower bound of $\Omega(\tmix \cdot \davg)$ for this problem, for wide (in fact, polynomial) ranges of the two parameters.\footnote{Observe that the two parameters have to obey some bound for such a lower bound to hold --- in general, any problem can be solved by downloading all the $n$ vertices of the graph: thus, if $\tmix \cdot \davg > \omega(n)$, one can solve the uniform-at-random sample problem with less than $o(\tmix \cdot \davg)$ vertex queries.} The lower bound holds even for algorithms that know constant approximations of $\davg$.

\smallskip

Our lower bound construction for the uniform-at-random sample problem also provides (ii) a tight lower bound of $\Omega\left(\davg \tmix\right)$ for the average score problem --- in fact, we resolve the complexity of the average score problem by showing that our lower bound coincides with the complexity of some previously proposed algorithms, whose analysis we improve.

The same lower bound construction further resolves (iii) the complexity of the average-degree estimation problem, and (iv) entails
 a non-tight, but significant, lower bound for the problem of guessing the graph order (that is, the number of vertices in the graph).

\smallskip

It is interesting to note that all the algorithms that were proposed  require $O(\log n)$ space, while our lower bounds hold for general algorithms with no space restriction. Thus, the problems we consider can be solved optimally using only tiny amounts of space.


\section{Preliminaries}

Consider a connected and undirected graph $\G=\langle V_{\G}, E_{\G}\rangle$ with no self-loops (e.g., the Facebook friendship graph), and a bounded function on its vertices $\calF:V_{\G}\rightarrow [0,1]$.\footnote{Note that from any bounded function we can get a function with range $[0,1]$, through a simple affine transform. Therefore, our results can be trivially extended to functions with any bounded range.}
We aim to estimate the average value of this function, i.e., $f_{avg}=\sum_{v\in V_{\G}}\calF(v)/n$ where $n=\vert V_{\G}\vert$. 

\smallskip

Motivated by  applications,  we assume that accessing the graph is a costly operation, and that there is little or no information about its global parameters  such as the average degree, the number of vertices or the maximum degree. However,  we can access a ``friendship" oracle: that is, an oracle which, given a vertex $v\in V_\G$, outputs references (their ids, or their URLs) to its neighbors $N_{v}=\{u\in V_\G\vert (v,u)\in E_{\G}\}$.
In such a setting, it is natural to approximate $f_{avg}$ by taking samples from a Markov chain based on the graph structure (see, e.g., \cite{CooperParams, ChernofMark}). A simple random walk on the graph, though, will not serve our purposes since it  samples vertices with probability proportional to their degree, while our goal is to take a uniform average of the values of $\calF$.

On a graph $\G=\langle V_{\G},E_{\G}\rangle$, a lazy simple random walk is a Markov chain which being at vertex $v\in V_{\G}$, stays on $v$ with probability $1/2$ and moves to $u\in N_v$ with probability $1/(2\deg(v))$. Given that $\G$ is connected, the lazy random walk will converge to its unique stationary distribution which we denote  by $\Pi^1$ and which is equal to  $\Pi^1(v)=\deg(v)/2\vert E_{\G}\vert$, $\forall v\in V_{\G}$.

  By $\tmix(\G)$ we refer to the mixing time of the lazy random walk on $\G$, which is the minimum integer satisfying: for any $\tau\geq \tmix(\G)$, $\left|X^{\tau}- \Pi^1 \right|_1\leq 1/4$, where $X^{\tau}$ is the distribution of the lazy walk at time $\tau$, and $\left|\cdot\right|_1$ is the $1$-norm of a vector. Note that by the theory of Markov chains, by taking $\tau\geq \tmix(\G)\log(1/\epsilon)$ we have  $\left| X^{\tau}- \Pi^1 \right|_1\leq \epsilon$.
We denote the uniform distribution on vertices of $\G$ by $\Pi^0$, i.e., $\Pi^0(v)=1/\vert V_{\G}\vert, \forall v \in \G$. In general, we denote a distribution on $V_{\G}$ weighing each vertex $v\in V_{\G}$ proportional to $\deg(v)^{\zeta}$ by $\Pi^{\zeta}$. We may drop all the subscripts when doing so does not cause ambiguity.

Following the framework of \cite{FlavioKumar}, we consider two measures of time complexity. First the number of  downloaded vertices, and second the number of steps the algorithm takes to produce the output. Note that accessing an already downloaded vertex has a negligible cost, and hence, the most relevant cost of the algorithm is the number of downloaded vertices. As mentioned in the introduction,  the algorithms considered in \cite{FlavioKumar} and in this paper, only require space to store constantly many vertices, while our lower bound results hold regardless of the space complexity of the algorithms  

\smallskip

\textbf{Our Contribution.} We begin by discussing the problem of producing an approximately-uniform sample vertex from an unknown graph (Problem \ref{pr1}); showing that some algorithm presented in the literature are optimal (Theorem \ref{main21}). Then, we proceed to  the problem of estimating $f_{avg}$ for a bounded function $\calF:V_{\G}\rightarrow[0,1]$ (Problem \ref{pr2}). We extend the positive results of \cite{FlavioKumar}; we particularly study one algorithm, the ``Maximum Degree algorithm'', which we show to be optimal in the number of downloaded vertices. This algorithm requires knowledge of some constant approximation of the graph's mixing time, and and upper bound on its maximum degree --- we also mention in the Appendix three other algorithms, two of which had been proposed in \cite{FlavioKumar}, that give non-optimal bounds for some of the problems we consider. We also show new lower bounds for constant approximations of the order and the average degree of a graph. A summary of our contribution is presented in Table \ref{table1}.

\begin{center}
\begin{table}
\small
\begin{tabular}{|c|c|c|}
\hline
&&\\
&Upper Bound & Lower Bound\\
&&\\
\hline
Average of  a &   $O(\tmix\davg\log(\delta^{-1}) \epsilon^{-2})$&$\Omega(\tmix\davg\log(\delta^{-1}) \epsilon^{-2})$\\
 Bounded Function& {\footnotesize (Theorem~\ref{main12}, with an Algorithm of \cite{FlavioKumar})} &{\footnotesize (Theorem~\ref{main22})} \\
 \hline
Uniform Sample&$O(\tmix \davg\log(\epsilon^{-1}))$ &$\Omega(\tmix \davg)$\\
&{\footnotesize (\cite{FlavioKumar}\;)}&{\footnotesize (Theorem~\ref{main21})} \\
\hline

  Number of Vertices &$O(\tmix\max\{\davg, \vert\Pi^{1}\vert_2^{-1/2}\}\log(\delta^{-1})\log(\epsilon^{-1})\epsilon^{-2})$  &$\Omega(\tmix\davg)$ \\
  &{\footnotesize (\cite{Katzir}\;)}&{\footnotesize (Theorem~\ref{main23})} 
  \\
  \hline
  Average Degree&$O({\cal D}^2\tmix\davg\log(\delta^{-1})\epsilon^{-2})$&$\Omega(\tmix\davg)$\\
   &{\footnotesize (Application of Theorem \ref{main12})}&{\footnotesize (Theorem~\ref{main23})}\\
\hline
\end{tabular}
\caption{\label{table1}Upper bounds and lower bounds on number of queried vertices for algorithms which explore the graph using a neighborhood oracle and a seed vertex. As mentioned before, $\tmix$ is the mixing time of the lazy random walk on the graph, $\davg$ is its average degree, $\cal D$ is an upper bound on its maximum degree, $\Pi^1$ is its stationary distribution,  and $\epsilon$ and $\delta$ are the precision parameters. The lower bounds for estimating the number of vertices and the average degree hold for any constant approximation.}
\end{table}
\end{center}
In Section \ref{sec2}, we prove our lower bound results  on the number of oracle calls for the following problems: sampling a vertex, learning the order, and the average degree of the graph. Estimations of these parameters in a graph are intertwined meaning with a knowledge about one of them the complexity of estimating the other one changes. For instance, Goldreich and Ron \cite{AvgGoldreich} show that, if  a uniform sample generator is accessible at zero cost (alternatively, if the order of graph is precisely known), then the average degree is computable in $\sqrt{\vert V_{\G}\vert/\davg}$ steps. Our lower bounds for  the aforementioned  problems hold if the algorithm has no $\epsilon-$approximation of the order, and of the average degree of the graph. On the other hand, the lower bound we obtain for  an $\epsilon-$approximation of a bounded function's average holds even if the graph's structure is precisely known. 

\vspace{2 mm}

\noindent
\textbf{Number of downloads to produce a close-to-uniform sample.} We prove a lower bound of $\Omega{(\tmix\davg)}$,  thus, showing that the rejection algorithm and the maximum degree algorithm suggested in the literature \cite{FlavioKumar}   are optimal (Theorem \ref{main21}).

\vspace{2 mm}

\noindent
\textbf{Number of downloads to estimate the number of vertices.} The problem of estimating the order of a graph is widely  studied \cite{CooperParams, Katzir}. Katzir et al. \cite{Katzir} (2011) propose an algorithm that, having access to an oracle that produces random vertices from the graph's stationary distribution, requires $\max\{\frac{1}{\vert \Pi^1\vert_{2}}, \davg \}(\frac{1}{\epsilon^2\delta})$ samples to obtain  an $\epsilon$ approximation with probability at least $1-\delta$. It has been shown the number of samples in Katzir's algorithm  is necessary (\cite{LowSize}).
The Katzir et al. algorithm implies an upper bound of $\tmix\max\{\frac{1}{\vert \Pi^1\vert_{2}}, \davg \}(\frac{\log(\epsilon^{-1})}{\epsilon^2\delta})$ vertex queries to obtain  an $\epsilon$ approximation with probability at least $1-\delta$ in our friendship-oracle model. 
In Theorem \ref{main23} we present a lower bound on the number of accesses to the vertices, to get a constant approximation of the graph's order in our friendship-oracle model. 
Our lower bound is tight for the graphs that satisfy $\frac{1}{\left| \Pi^1\right|_{2}}< \davg$ --- that is, the graphs whose variance of the degree distribution is greater than $n$.\footnote{
Let $\text{pr}_{k}$ be the fraction of vertices with degree $k$. We have $\left| \Pi^1\right|_{2}=n\sum_{i=1}^{n} \text{pr}_{k}\frac{k^2}{4\vert E\vert^2}= \frac{1}{n\davg^2}\sum_{i=1}^{n} \text{pr}_{k}k^2 $. Thus, to have $1/\sqrt{\left| \Pi^1\right|_{2}}\leq \davg$, it is necessary and sufficient to have $\sum_{i=1}^{n}k^2 \text{pr}_{k}>n$.
}
This class include, say, all the  graphs having a power-law degree distribution with exponent smaller than $3/2$ (e.g., social networks \cite{LKF05}).

\vspace{2 mm}

\noindent
\textbf{Number of downloads to estimate the average degree.} There are quite a few results on estimating the average degree of a graph. The first one by Feige et al. \cite{AvgFeige} introduced a sublinear algorithm of complexity $\sqrt{\vert V_{\G}\vert}$ for a 2-approximation. Goldreich et al. \cite{AvgGoldreich} extends Feige et al.'s result and presents an  $(1\pm\epsilon)$ approximation algorithm with running time $\bigO(1/\epsilon)\sqrt{\vert V_{G}/\davg\vert}$ --- they also prove a lower bound on the number of samples of $\sqrt{\vert V_{G}\vert/\davg}$ --- both of \cite{AvgFeige} and \cite{AvgGoldreich} assume to have access to an oracle capable of producing a uniform at random vertex. Recently, Dasgupta et al.  \cite{KumarAvg} showed that by sampling $\bigO (\log({\cal D})\log\log({\cal D}))$ vertices of a graph from some weighted distribution\footnote{Dasgupta et al.  use an oracle samples each $v\in V_{\cal G}$ proportional to $\deg(v)+c$ for some constant $c$. Note that for $c=0$ this distribution will be the same as the stationarity.}
 one can obtain a $(1\pm\epsilon)$ approximation of its average degree,   where $\cal D$ is an upper bound on the maximum degree. By factoring in the the cost of sampling, the complexity becomes $\bigO(\tmix\log({\cal D})\log\log({\cal D}))$. Taking ${\cal D}=n$ and adding the cost of estimating the graph size, takes the upper bound to: $\bigO\left(\tmix\left((\log({n})\log\log({n}))+\davg+\frac{1}{\vert \Pi^1\vert_{2}}\right)\right)$. 

In Theorem \ref{main23} of this paper we show that by downloading $o(\tmix \davg)$ vertices, it is impossible for an algorithm to have any constant approximation of the average degree $\davg$ with probability more than some constant.

\vspace{2 mm}

Finally, our main result is the following lower bound --- unlike the above three lower bounds, this one holds even if we know exactly the graph's structure. 
\vspace{2 mm}

\noindent
\textbf{Number of downloads to find an $\epsilon, \delta$ approximation for the average of a bounded function.} In Theorem \ref{main22}, we show that an algorithm requires $\Omega\left(\tmix\davg(1/\epsilon^2)\log(1/\delta)\right)$ vertex downloads to produce an $\epsilon-$additive approximation of $f_{avg}$, with probability at least $1-\delta$. This lower bound, together with Theorem \ref{main12}, allows us to conclude that the ``maximum degree algorithm'' is an optimal algorithm for this problem. Note that this algorithm has to have some upper bound $\cal D$ on the maximum degree of the graph. In many situations, one can assume that this information is available --- for instance ${\cal D} \le n$ and, in many cases, one can assume to have a constant approximation to the order of the graph (for instance, in Facebook, one could claim that $\cal D$ is no larger than the world's population.)
Observe that the maximum degree algorithm suffers no loss in getting a large $\cal D$, as opposed to a tighter one, since $\cal D$ does not impact the upper bound on the number of downloaded vertices. 

\newpage
\subsection{Statement of Problems and Results}

\noindent
\begin{prob}\label{pr1}
Input: A seed vertex  $s\in V$ in graph  $\G=\langle V_{\G},E_{\G}\rangle$. Output: A random vertex $v \in V_{\G}$ whose distribution is at  total variation distance at most $\epsilon$ from the uniform one on $V_{\G}$.
 \end{prob}

Several algorithms have been proposed for Problem~\ref{pr1} \cite{FlavioKumar,Katzir} --- we will specifically consider the ``maximum degree sampling'' algorithm, the ``rejection sampling'' algorithm, and the ``Metropolis Hasting'' algorithm.


\smallskip

The efficiency of the three algorithms has been studied in terms of the number of their running time (or the number of steps they make on the Markov chain they are based on) and, more importantly, on the number of \emph{queries}\footnote{A vertex is ``queried'', when the set of its neighbors is obtained from the oracle for the first time --- equivalently, when it is downloaded.} (or downloaded vertices) that the algorithm performs. The rejection sampling and maximum degree  algorithms  produce a close-to-uniform random vertex by  querying\footnote{To get $\epsilon$ close to the uniform distribution we need $\bigO\left(\tmix \davg\log (\epsilon^{-1})\right)$ downloads.
}   $\tilde{\bigO}\left(\tmix \davg\right)$ distinct vertices from the graph, where $\tmix$ is the mixing time of a simple random walk on $\G$, and $\davg$ is the average degree of $\G$. In terms of space complexity, each of these algorithms is based on a simple random walk on $\G$ and thus only require  space to save constant number of vertices. 

One of the main results of this paper is Theorem \ref{main21}, which shows the optimality of the maximum degree, and of the rejection sampling, algorithms for Problem~\ref{pr1} --- their running time. We observe that our lower bound holds regardless of the amount of space available to the algorithm.

\begin{theorem}\label{main21} 
For any large enough constant $c$, and for any arbitrary $n$, $d=\omega(\log n)$, and $ t= o(\frac{n}{d^2})$ there exists a distribution over graphs $\G=\langle V,E\rangle$, each having mixing time $\Theta(t)$, $\E(\vert V\vert)=\Theta(n)$,   $\davg=\E_{v\in V}(\deg_v)=\bigT(d)$, such that any algorithm $\cal A$ that queries less than $\davg\tmix/c$ vertices of $\G$, and that returns some random vertex of $\G$ with distribution $\Pi_{\cal A}$, is such that\footnote{Observe that the expected $\ell_1$ distance between the distributions is over the random variable $\Pi_{\cal A}$.}
 $\E\left[\left|\Pi_{\cal A}-\Pi^0\right|_1\right]\geq \frac{24}{100}-\frac{202}{c-1}$.
 
The same lower bounds also hold if one aims to obtain the generic $\Pi^\zeta$ distribution: if $\zeta >1$,  and $d$ and $t$ satisfy $d =o(t^{\frac{\zeta-1}{\zeta}})$, then any algorithm $\cal A$ that queries less than $\davg\tmix/c$ vertices of $\G$, and that returns some random vertex of $\G$ with distribution $\Pi_{\cal A}$, is such that: $\E\left[\left|\Pi_{\cal A}-\Pi^{\zeta}\right|_1\right]\geq \frac{24}{100}-\frac{202}{c-1}$.
\end{theorem}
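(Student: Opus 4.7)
The plan is to prove Theorem~\ref{main21} by combining a carefully-tuned graph construction with a direct counting argument on the algorithm's ``known'' vertices. First, I would let the distribution over graphs be the point-mass on a deterministic $d$-regular graph $G$ on $\Theta(n)$ vertices with mixing time $\Theta(t)$. A candidate construction is a ``barbell'': two $d$-regular expanders $A$ and $B$, each on $n/2$ vertices, whose internal mixing times are $O(\log n/\log d)=o(t)$ under $d=\omega(\log n)$, joined by $k = \Theta(nd/\sqrt{t})$ cross-edges that replace an equal number of internal edges so as to preserve $d$-regularity. Standard isoperimetric bounds then yield $\Phi(G) = \Theta(k/(nd)) = \Theta(1/\sqrt{t})$, and Cheeger's inequality together with a matching bottleneck lower bound give $\tmix(G) = \Theta(1/\Phi^{2}) = \Theta(t)$. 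The parameter regime $t = o(n/d^{2})$ keeps $k$ in the admissible range.

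Next, I would observe the key information-theoretic bottleneck: in the neighborhood-oracle model, vertex identifiers (URLs) are opaque, so an algorithm can only output vertices whose identifiers have been revealed by the oracle. After ${\cal A}$ has queried a set $Q$ of at most $q := \davg\tmix/c = \Theta(td/c)$ vertices, the set of vertices it is aware of is $S := \{s\} \cup Q \cup N(Q)$, which satisfies $|S| \le q(d+1)+1 = O(td^{2}/c)$ because $G$ is $d$-regular. Hence the output distribution $\Pi_{\cal A}$ is supported inside $S$.

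For the $|\cdot|_{1}$ bound, note that any probability distribution $\mu$ supported on $S$ satisfies
\[
\left|\mu - \Pi^{0}\right|_{1} = \sum_{v \in S}\left|\mu(v) - 1/n\right| + \sum_{v \notin S} 1/n \;\ge\; 2\left(1 - |S|/n\right).
\]
Plugging in $|S| = O(td^{2}/c)$ and using $t = o(n/d^{2})$, one obtains
\[
\left|\Pi_{\cal A} - \Pi^{0}\right|_{1} \;\ge\; 2 - O\!\left(\frac{td^{2}}{c\,n}\right) \;=\; 2 - o(1),
\]
which, for every sufficiently large $c$, comfortably dominates $\tfrac{24}{100} - \tfrac{202}{c-1}$. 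The same calculation also handles $\Pi^{\zeta}$ for $\zeta > 1$, since in the $d$-regular graph $\Pi^{\zeta} = \Pi^{0}$ coincides with the uniform distribution (and the additional constraint $d = o(t^{(\zeta-1)/\zeta})$ is consistent with the barbell parameters).

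The main obstacle is the construction: obtaining a $d$-regular graph whose mixing time is \emph{exactly} $\Theta(t)$ requires both the Cheeger upper bound and a matching bottleneck lower bound, and one must verify that the local rewiring used to preserve $d$-regularity when inserting cross-edges does not perturb the spectral gap of either expander half or the global conductance across the cut. The information-theoretic step is essentially immediate once the URL model is made explicit, since the only way identifiers can enter the algorithm's state is through the neighborhood oracle; the strength of the lower bound is then driven entirely by the ``blow-up factor'' $d$ hidden inside each query, which turns the $q = td/c$ query budget into only $O(td^{2}/c) = o(n)$ known identifiers.
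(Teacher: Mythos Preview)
Your information-theoretic step is wrong, and the error is fatal: the set $S$ of ``known identifiers'' is a \emph{random} set that depends on the algorithm's internal coin flips, whereas $\Pi_{\cal A}$ in the theorem denotes the \emph{marginal} law of the returned vertex, averaged over those coin flips. Thus $\Pi_{\cal A}$ need not be supported on any fixed set of size $o(n)$; each individual run reveals only $O(td^2/c)$ identifiers, but different runs reveal different identifiers, and the marginal can be spread over all of $V$. Concretely, on your own $d$-regular barbell $G$ with $\tmix(G)=\Theta(t)$, the algorithm ``run the lazy simple random walk for $C\cdot t$ steps and output the last vertex'' uses at most $C t \le \davg\tmix/c$ queries (since $d=\omega(\log n)\gg Cc$) and, because the stationary distribution of a regular graph is uniform, has $\left|\Pi_{\cal A}-\Pi^0\right|_1 \le 1/4$. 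So your construction admits an algorithm that beats the claimed lower bound; the inequality $\left|\mu-\Pi^0\right|_1\ge 2(1-|S|/n)$ simply does not apply to the marginal $\Pi_{\cal A}$. The same objection kills the $\Pi^\zeta$ part, since on a regular graph $\Pi^\zeta=\Pi^0$.

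This is exactly why the paper does \emph{not} use a point mass. The paper's construction places the hardness in the \emph{randomness of the graph}: it takes a base graph $\cH_{n,d,d/t}$ and randomly ``decorates'' it by attaching, to each vertex independently with probability $1/t$, a star of $\Theta(t)$ new (\textsc{starred}) vertices. The key Lemma then shows that, with probability at least $99/100-202/(c-1)$ over the random decoration \emph{and} the algorithm's coins together, an algorithm making at most $\davg\tmix/c$ queries never touches a single \textsc{starred} vertex. Hence, even after marginalizing over the algorithm's randomness, $\Pi_{\cal A}$ places essentially no mass on the \textsc{starred} half of $V$, which forces $\left|\Pi_{\cal A}-\Pi^0\right|_1$ to be bounded away from zero. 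The point is that the algorithm cannot average its way into the hidden vertices when it does not know where they are; a deterministic graph offers no such hiding place.
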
 

The above theorem, and the other lower bound results that we  mention in this section, will be proved in Section \ref{sec2}. 

 \medskip
 
Then,  we consider the problem of finding the average of a function $\calF$ defined on vertices of a graph and ranging in $[0,1]$.
  \begin{prob}\label{pr2}
\noindent
Input: A seed vertex  $s\in V$ in graph  $\G=\langle V,E\rangle$ --- each vertex $v$ holds a value $0 \le \calF(v) \le 1$ which we  learn upon visiting it. Output: $\bar{f}$ such that
$\cP \left( \vert\bar{f}-f_{avg}\vert\leq  \epsilon \right) \geq 1- \delta$.
 \end{prob}
 

Note that having a uniform sampler (the maximum degree or rejection sampling algorithm of \cite{FlavioKumar}), we can have an $\epsilon$ approximation of $f_{avg}$ with probability $1-\delta$ by taking $\bigO(\epsilon^{-2}\log(\delta^{-1}))$ independent samples which are $\epsilon$ close to uniformity. In total, the number of queries will be $\bigO(\tmix\davg\log(\delta^{-1})\epsilon^{-2}\log{(\epsilon^{-1})})$.
Here we propose a slight variation of the  ``maximum degree'' algorithm to obtain a tight upper bound. We improve the analysis of the ``maximum degree algorithm'' in Theorem~\ref{main12} --- its performance beats the other natural three algorithms, and the main result of this paper is that this performance is  optimal (Theorem \ref{main22}).  We  discuss the performances of the other algorithms in Appendix~\ref{sec:otheralgos}.

\begin{algorithm}
\small
\begin{algorithmic}[1]
\REQUIRE Seed vertex $s\in V_{\G}$, a constant approximation of $\tmix$, and an upper bound $\D$ on $d_{max}$
\ENSURE An $\epsilon$ additive approximation of $f_{avg}$ with probability at least $1-\delta$
\STATE{Consider the maximum degree Markov chain: at vertex $v\in V$  go to the generic $u\in N_v$ with probability $1/\D$, otherwise stay at $v$. 

}
\STATE{Starting from $s$, run the chain for $\frac{\tmix}{d_{min}}\D \cdot \left(1+   \epsilon^{-2} \ln \delta^{-1}\right)$ steps --- let $v_0 = s, v_1, v_2, \ldots$ be the states that are visited by the walk}
\STATE{$T\leftarrow\tmix \D/d_{min}$}
\STATE{$S\leftarrow0,t\leftarrow0,i \leftarrow 0$}
\WHILE{$t<\frac{\davg}{d_{min}} (\tmix/\epsilon^2)\log (1/\delta)$}
\STATE{$i \leftarrow i + 1$}
\IF{$v_i\neq v_{i-1}$}
\STATE{$t \leftarrow t + 1$}
\ENDIF
\STATE{$S=S+ {\cal F}(v_i)$}
\ENDWHILE
\RETURN{$S/i$}
\end{algorithmic}
\caption{\label{alg:max}The Maximum Degree Algorithm.}
\end{algorithm}

\medskip

The proof of the following Theorem can be found in Appendix~\ref{app:mdub}.
\begin{theorem}\label{main12}
Consider a graph $\G=\langle V,E\rangle$, and a function $\calF: V\rightarrow [0,1]$. Let $\tmix$ be the mixing time of the simple lazy random walk on $\G$. Let  $\bar{f}$ be the value returned by 
Algorithm~\ref{alg:max}. Then,

\begin{equation}
\cP\left(\vert \bar{f} -f_{avg} \vert\geq \epsilon\right) \leq  \delta
\end{equation}

This algorithm  queries  $O(\tmix \davg\epsilon^{-2} \log(\delta^{-1}))$ vertices from the graph, and requires space for saving a constant number of them. 
The number of  computational steps it performs is $O({\cal D}\tmix \/\epsilon^{-2} \log(\delta^{-1}))$.

\end{theorem}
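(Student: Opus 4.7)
The key observation is that the Maximum Degree chain has transition matrix $P_{MD}=I-L/\D$, where $L=D-A$ is the combinatorial Laplacian of $\G$; hence $P_{MD}$ is reversible with respect to the uniform distribution $\Pi^{0}$ (since $\Pi^{0}(v)P_{MD}(v,u)=1/(n\D)$ for every edge $\{u,v\}$), and so $\E_{\Pi^{0}}[\calF]=f_{avg}$. The plan is to show that the time-average $\bar f_i=(1/i)\sum_{j=1}^{i}\calF(v_j)$ concentrates around $f_{avg}$ by invoking a standard Chernoff--Hoeffding inequality for reversible Markov chains (e.g., Lezaud or Chung--Lam--Tsourakakis), which yields $\cP(|\bar f_i-f_{avg}|\geq\epsilon)\leq 2\exp(-c\,i\epsilon^{2}\gamma_{MD})$ from a near-stationary start, where $\gamma_{MD}$ is the spectral gap of $P_{MD}$.

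The technical core of the argument is a lower bound on $\gamma_{MD}$ in terms of $\tmix$. A direct computation of the Dirichlet forms gives
\[
\mathcal E_{L}(f,f)=\frac{1}{4|E|}\sum_{\{u,v\}\in E}(f(u)-f(v))^{2},\qquad \mathcal E_{MD}(f,f)=\frac{1}{n\D}\sum_{\{u,v\}\in E}(f(u)-f(v))^{2},
\]
so $\mathcal E_{MD}=(2\davg/\D)\,\mathcal E_{L}$. Since $\max_{v}\Pi^{0}(v)/\Pi^{1}(v)=\davg/d_{\min}$, a standard change-of-measure bound gives $\mathrm{Var}_{\Pi^{0}}(f)\leq(\davg/d_{\min})\,\mathrm{Var}_{\Pi^{1}}(f)$. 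Combining the two comparisons yields $\gamma_{MD}\geq (2d_{\min}/\D)\gamma_{L}=\Omega(d_{\min}/(\D\tmix))$, where the last step uses the standard fact that for any lazy reversible chain the relaxation time is $O(\tmix)$.

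Plugging this bound on $\gamma_{MD}$ into the Chernoff--Hoeffding inequality shows that $O(\D\tmix\epsilon^{-2}\log\delta^{-1}/d_{\min})$ chain steps, preceded by a burn-in of $T=\tmix\D/d_{\min}$ steps (which brings the walk within total variation $1/4$ of $\Pi^{0}$), suffice for $\cP(|\bar f_i-f_{avg}|\geq\epsilon)\leq\delta$. This matches exactly the trajectory length used by Algorithm~\ref{alg:max} and yields the claimed $O(\D\tmix\epsilon^{-2}\log\delta^{-1})$ bound on computational steps, using $d_{\min}\geq 1$ in any connected graph. For the query count, note that at state $v$ the chain performs a non--self-loop move (the only move that can trigger a new download) with probability $\deg(v)/\D$, so under stationarity the counter $t$ grows at mean rate $\davg/\D$ per chain step. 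A Chernoff bound on these (near-independent) Bernoulli indicators concentrates $t$ around its mean, so the while loop terminates once $t$ reaches the threshold $(\davg/d_{\min})\tmix\epsilon^{-2}\log\delta^{-1}$; since each queried vertex costs one non--self-loop transition, the number of distinct queries is at most $t+1=O(\davg\tmix\epsilon^{-2}\log\delta^{-1})$. The main obstacle I expect is propagating the non-stationary initial distribution through the Chernoff bound cleanly: Lezaud-type inequalities introduce a factor of $\sqrt{1/\pi_{\min}}$ in front of the exponential, which the burn-in phase of length $T$ is designed to absorb into a constant multiplicative loss.
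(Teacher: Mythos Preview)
Your proposal is correct and follows essentially the same approach as the paper's proof. Both arguments proceed by (i) bounding the mixing time of the Maximum Degree chain by $O(\D\tmix/d_{\min})$, (ii) applying a Chernoff--Hoeffding bound for Markov chains (the paper uses Chung--Lam--Liu--Mitzenmacher, Theorem~\ref{thmChernoffMarkof}, with the mixing time in the exponent rather than the spectral gap, but these are equivalent for lazy reversible chains), and (iii) converting chain steps to queries via the self-loop probability $1-\deg(v)/\D$.

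The only notable difference is in step (i): the paper simply imports the bound $\tmix(P_{MD})\le \D\tmix/d_{\min}$ as a black box from \cite{FlavioKumar}, whereas you re-derive it via the Dirichlet form / variance comparison $\mathcal E_{MD}=(2\davg/\D)\mathcal E_{L}$ and $\mathrm{Var}_{\Pi^0}\le(\davg/d_{\min})\mathrm{Var}_{\Pi^1}$. Your derivation is self-contained and arguably cleaner. For step (iii), the paper uses only Markov's inequality on the expected holding time $\D/\deg(v)$, while you appeal to concentration of the non-self-loop indicators; either suffices, though if you want a genuine Chernoff bound there you should note that it again follows from Theorem~\ref{thmChernoffMarkof} applied to $g(v)=\deg(v)/\D$, since the indicators are not independent.
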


The main result of this paper is the following lower bound which complements the upper bound given in the previous theorem:
\begin{theorem}\label{main22}
For any arbitrary $n$, $d=\omega(\log n)$, and $ t= o(\frac{n}{d^2})$ there exists a distribution over graphs $\G=\langle V,E\rangle$ with mixing time $\Theta(t)$, $\E(\vert V\vert)=4n$,   $\davg=\E_{v\in V}(\deg_v)=\bigT(d)$, and a function $\calF: V\rightarrow \{0,1\}$ such that,
any  algorithm $\cal A$ as described above which aims to return the average of $\calF$, with $\epsilon$
precision  for arbitrary $0<\epsilon,\delta <1$, and queries less than $\Omega(\tmix\davg\epsilon^{-2}\log(\delta^{-1}))$  vertices of $\G$ fails with probability greater than $\delta$.
 \end{theorem}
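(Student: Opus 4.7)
The plan is to reduce the approximation problem to a Bernoulli distinguishing task, using the hard graph family of Theorem~\ref{main21} as a ``cost amplifier'' that charges each useful Bernoulli sample $\Omega(\tmix\davg)$ queries. Concretely, I would reuse the graph distribution built in the proof of Theorem~\ref{main21}. From that construction one can identify a ``hidden'' set $H \subseteq V_\G$ of constant density $\rho = \Theta(1)$ with the property that, for any adaptive algorithm, the expected number of elements of $H$ among its first $q$ queried vertices is at most $O(q/(\tmix\davg))$. Intuitively, $H$ is the ``uniformly distributed tail'' that witnesses the hardness of sampling in Theorem~\ref{main21}: it is separated from the seed by a bottleneck whose traversal has amortized cost $\Theta(\tmix\davg)$ per entry.

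Next I would randomize $\calF$ so that $f_{avg}$ encodes a hidden bit. Pick $b\in\{0,1\}$ uniformly; set $\calF(v)=0$ for every $v\notin H$, and, independently for each $v\in H$, set $\calF(v)=1$ with probability $p_b = 1/2 + (-1)^b\epsilon$. A Chernoff bound over the independent labels on $H$ gives $f_{avg} = \rho\, p_b \pm o(\epsilon)$ with probability $1-o(1)$; since $\rho$ is a positive constant, $f_{avg}$ differs by $\Theta(\epsilon)$ between the two values of $b$. After rescaling $\epsilon$ by a suitable constant, any algorithm achieving an $(\epsilon,\delta)$-approximation on this instance family must recover $b$ from its transcript with error probability $\le 2\delta$.

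The final step is a standard data-processing / hypothesis-testing bottleneck. The only information about $b$ carried by the transcript comes from the $\calF$-values of queried $H$-vertices, and conditional on $b$ these values are i.i.d.\ $\mathrm{Bernoulli}(p_b)$. Writing $m$ for the number of distinct $H$-vertices queried, the first paragraph gives $\E[m] = O(q/(\tmix\davg))$; a Markov-type tail bound (or a direct per-step conditioning) upgrades this to $m = O(q/(\tmix\davg))$ with overwhelming probability. Recovering $b$ with error $\le 2\delta$ from $m$ i.i.d.\ $\mathrm{Bernoulli}(p_b)$ samples requires $m = \Omega(\epsilon^{-2}\log\delta^{-1})$ by the Chernoff--Stein lemma (or equivalently Le Cam's two-point method), since the per-sample KL divergence between $\mathrm{Bernoulli}(p_0)$ and $\mathrm{Bernoulli}(p_1)$ is $\Theta(\epsilon^2)$. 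Combining yields $q/(\tmix\davg) = \Omega(\epsilon^{-2}\log\delta^{-1})$, which is the claimed $q = \Omega(\tmix\davg\,\epsilon^{-2}\log\delta^{-1})$ lower bound.

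The main technical obstacle is upgrading Theorem~\ref{main21}'s single-output conclusion to the multi-query statement ``$q$ queries reveal only $O(q/(\tmix\davg))$ vertices of $H$''. Theorem~\ref{main21} controls the distribution of the \emph{returned} vertex, whereas here I need control over the entire exploration trajectory. I expect this to follow by replaying the Theorem~\ref{main21} argument in a per-step form: conditional on any past query transcript, the next queried vertex has probability $O(1/(\tmix\davg))$ of landing in $H$, because the algorithm's frontier of known-but-unqueried vertices is with high probability concentrated on the high-degree core side of the bottleneck. Making this rigorous will likely require tracking how quickly frontier mass can leak into $H$ across the bottleneck, and leveraging the extra randomness in the Theorem~\ref{main21} graph distribution (over the random matching or cross-edges that seal the bottleneck) to prevent the algorithm from ``shortcutting'' to $H$ through clever adaptive choice.
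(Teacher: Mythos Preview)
Your proposal is correct and follows essentially the same route as the paper. The hidden set $H$ is the \stared\ set from the decoration construction; the multi-query discovery bound you identify as the main obstacle is precisely the second bullet of Lemma~\ref{finalLemma}, whose proof is the per-step conditional argument you sketch; and the reduction to biased-coin distinguishing via \cite{SamplingLowerBound} is exactly how the paper concludes.
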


Finally,  we consider the problems of obtaining an approximation of the average degree, and the number of vertices, of a graph:
  \begin{prob}\label{pr3}
\noindent
Input: A seed vertex  $s\in V$ in graph  $\G=\langle V,E\rangle$. Output: an integer $\bar{n}$ such that
$\cP \left( \vert\bar{n}-|V|\vert\leq  \epsilon \right) \geq 1-  \delta$.
 \end{prob}
 By a result of Katzir \cite{Katzir}, we know by taking $\max\{\davg, 1/\sqrt{\vert\Pi^2\vert_2}\}\epsilon^{-2}\delta^{-1}$ samples from the stationary distribution we are capable to  obtain an $\epsilon$ approximation with probability at least $1-\delta$. To implement a sampling oracle using our neighborhood oracle,  we can  run a Markov chain for $\tmix\log (\epsilon^{-1})$ steps. Thus, the runtime will be $\tmix\max\{\davg, 1/\sqrt{\vert\Pi^1\vert_2}\}\log (\epsilon^{-1})\epsilon^{-2}\delta^{-1}$, which for constant $\epsilon$ and $\delta$ is  $\tmix\max\{\davg, 1/\sqrt{\vert\Pi^1\vert_2}\}$. Theorem \ref{main23} provides a lower bound for a constant approximation which is as mentioned before tight when the variance of the degree distribution is greater than $n$.

  \begin{prob}\label{pr4}
\noindent
Input: A seed vertex  $s\in V$ in graph  $\G=\langle V,E\rangle$. Output: an integer $\bar{d}$ such that
$\cP \left( \vert\bar{d}-\davg\vert\leq  \epsilon \right) \geq  1-\delta$.
 \end{prob}
Normalize  the function $deg:V_{\G}\rightarrow{\mathbb R}$ by dividing its value to $\D$. By Theorem~\ref{main12}, Algorithm~\ref{alg:max} provides an $\epsilon$ approximation with probability at least $1-\delta$ after downloading $O({\cal D}^2\tmix\davg\epsilon^{-2}\log(\delta^{-1}))$ many vertices --- that is,  $O({\cal D}^2\tmix\davg)$ many vertices for constant $\epsilon$ and $\delta$.
With Theorem \ref{main23}, we provide a lower bound for a constant approximation.  
 
 \begin{theorem}\label{main23} 
For any arbitrary $n$, $d=\omega(\log n)$, and $ t= o(\frac{n}{d^2})$ there exists a distribution of graphs $\G=\langle V,E\rangle$ with mixing time $\Theta(t)$, $\E(\vert V\vert)=\Theta(n)$,   $\davg=\E_{v\in V}(\deg_v)=\bigT(d)$ such that, for arbitrary constants $c' > 1$ and large enough $c$, any algorithm that queries at most $\davg\tmix / c$ vertices of the graph, and that outputs an estimation  $\bar{n}$ of $n=\vert V\vert$ (resp., an estimation $\bar{d}$ of $\davg=\vert E\vert/n$), has to satisfy $\max\{\bar{n}/n,n/\bar{n}\} >c'$ (resp., $\max\{\bar{d}/\davg,\davg/\bar{d}\}>c'$),  with probability at least $\frac{99}{100}-\frac{202}{c-1}$.  \end{theorem}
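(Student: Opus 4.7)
The plan is to derive Theorem~\ref{main23} from the graph family and indistinguishability analysis already developed for Theorem~\ref{main21}, combined with a multi-way pigeonhole reduction from estimation to distinguishing. The key leverage point in the Theorem~\ref{main21} construction is a ``peripheral'' portion of the graph that any $q \le \davg \tmix/c$-query algorithm touches only with probability $O(1/c)$. By varying the size of this peripheral portion in a controlled manner, I would build a family of $k = 100$ graph distributions $\mathcal{D}_1, \ldots, \mathcal{D}_{k}$, each with $\tmix = \Theta(t)$ and $\davg = \Theta(d)$, but with expected vertex counts $n_1, n_2, \ldots, n_k$ forming a geometric progression of ratio strictly greater than $(c')^2$.

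The indistinguishability step would proceed exactly as in Theorem~\ref{main21}. Outside of the peripheral region, the graphs $\mathcal{D}_i$ and $\mathcal{D}_j$ can be coupled so that the neighborhoods revealed during the first $q$ queries are identical. The coupling fails only if the algorithm queries a peripheral vertex, and that event has probability at most $O(q/(\davg \tmix)) = O(1/c)$ by the same hit-probability argument that powers Theorem~\ref{main21}'s $\ell_1$ bound. Consequently the total variation distance between the transcripts of any (possibly randomized) algorithm under $\mathcal{D}_i$ and under $\mathcal{D}_j$ is at most some $\mathrm{TV} \le 202/(c-1)$ for every pair $i, j$.

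The estimation-to-distinguishing reduction is then elementary. Because the $n_i$ are $(c')^2$-separated, the intervals $\{[n_i/c', c' n_i]\}_{i=1}^{k}$ are pairwise disjoint, so any single output $\bar n$ can satisfy $\max\{\bar n/n_i, n_i/\bar n\} \le c'$ for at most one $i$. Averaging over a uniformly random $i \in \{1, \ldots, k\}$, the algorithm's probability of producing a correct estimate is at most $1/k + \mathrm{TV} \le 1/100 + 202/(c-1)$. Picking the $\mathcal{D}_{i^\star}$ on which this success probability is minimized yields the distribution $\mathcal{G}$ claimed in the theorem; rescaling ensures $\E[|V|] = \Theta(n)$. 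The average-degree lower bound is proved identically, except that one varies the \emph{internal density} of the peripheral region (rather than its size) to produce a geometric progression of $\davg$ values while keeping $|V|$ and $\tmix$ within $\Theta(n)$ and $\Theta(t)$.

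The main obstacle is verifying that the $k$-fold perturbation of the Theorem~\ref{main21} construction preserves both $\tmix = \Theta(t)$ and, where required, $\davg = \Theta(d)$ uniformly across the family. For the $|V|$ lower bound, one must ensure that the peripheral region contributes only a vanishing fraction of the total degree (so $\davg$ is dominated by the core and remains $\Theta(d)$ across all $\mathcal{D}_i$). For the $\davg$ lower bound, one must arrange the densification so that $\davg$ scales while the spectral gap of the whole graph is essentially unchanged --- most naturally, by connecting the periphery to the core through a bottleneck so narrow that the mixing time is determined by the core and the edge-augmentations inside the periphery affect only its own stationary weight. The hypotheses $d = \omega(\log n)$ and $t = o(n/d^2)$ are what guarantee that such perturbations can be carried out without bumping into the trivial query bound of $|V|$.
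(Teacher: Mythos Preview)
Your approach is correct in spirit and would yield the theorem, but it is substantially more elaborate than what the paper does, and you miss one simplification that the paper exploits.

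The paper's proof uses only \emph{two} distributions: the decorated graph $\G_1$ (from Lemma~\ref{finalLemma}) and the undecorated graph $\G_2 = \cH_{n,d,d/t}$. Lemma~\ref{finalLemma} already shows that any algorithm with at most $\davg\tmix/c$ queries sees no $\stared$ vertex with probability at least $99/100 - 202/(c-1)$, and on that event the transcript is identical under $\G_1$ and $\G_2$. The decoration with parameter $c_1$ simultaneously multiplies $|V|$ by $(1+c_1)$ and divides $\davg$ by roughly $(1+c_1)$ (since $\davg(\G_1) = (d+2c_1)/(1+c_1)$ versus $\davg(\G_2)=d$). So one picks $c_1$ large enough relative to $c'$ and the same pair of graphs handles both the order and the average-degree lower bound. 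There is no need for a separate ``densification'' construction for $\davg$; your worry about preserving $\tmix$ under densification of the periphery is a complication the paper simply avoids.

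Where your $k=100$ multi-way reduction genuinely buys something is in the constant: with only two indistinguishable hypotheses, a standard Le~Cam argument gives failure probability roughly $\tfrac12(99/100 - 202/(c-1))$, not the $99/100 - 202/(c-1)$ stated in the theorem. Your pigeonhole over $k=100$ geometrically spaced values of $c_1$ recovers the stated constant exactly. The paper's written proof is terse on this point; your version is more careful there, at the cost of a more involved construction. If you adopt the paper's observation that varying $c_1$ alone moves both $|V|$ and $\davg$, you can drop the separate $\davg$ construction entirely and keep only the $k$-way averaging step.
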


\section{Proofs of the Main Theorems}\label{sec2}
 The proof of our Lower Bounds will be based on the following high-level strategy. Nature will first randomly sample a graph $\mathcal{H}$ according to some distribution; then with probability $1/2$, $\mathcal{H}$ will be the unknown graph traversed by the algorithm; with the remaining probability, the algorithm will traverse a graph $\mathcal{G}$ which is obtained from $\mathcal{H}$ by means of a transformation that we call the \emph{decoration construction}. We will prove that, for the right choice of the distribution over $\mathcal{H}$, an algorithm that performs too few queries to the unknown graph will be unable to tell with probability more than $1/2 + o(1)$, whether the unknown graph it is traversing is distributed like $\mathcal{H}$, or like $\mathcal{G}$.
 
 The decoration construction will guarantee that the properties (e.g., number of nodes, average degree, or even the values assigned by the bounded function to the vertices) will be quite far from each other in $\mathcal{H}$ and $\mathcal{G}$. This will make it impossible for the algorithm to get good approximation of any of those properties --- we will also show it impossible for the algorithm to return a close to uniform-at-random vertex (essentially because the decoration construction will add a linear number of nodes to $\mathcal{H}$, and the algorithm will be unable to visit any of them with the given budget of queries.)
 
 \smallskip
 
 
 \smallskip

We present a roadmap of our proof strategy here: We start by describing the  \emph{decoration construction} which, given any graph $\cH$, produces a graph $\G$ with similar mixing time and average degree, but with a linear number of ``hidden'' new vertices. After presenting the definition for  the decoration construction,  in Definition \ref{defingraph} we introduce a class of graphs to which we apply this construction.  These graphs' mixing time and average degree can be set arbitrarily. Later, in Lemma \ref{finalLemma} we prove that if an algorithm, equipped only with the neighborhood oracle traverses a graph of this type and queries few vertices of it, it will not be capable of finding any of its hidden vertices.  This is our main lemma from which Theorems  \ref{main21},  \ref{main22}, and \ref{main23} can be concluded.  We now proceed to the formal definitions:
 
 \vspace{0.3 cm}
 \begin{defin}
 \textbf{The Decoration Construction.}  Let  $\cH= \langle V,E\rangle$ be an arbitrary graph. We construct  $\G$ from $\cH$ in the following way:

\smallskip

  Take $t:= \tmix(\cH) $, and   mark any vertex $v\in V$ with probability $1/t$. For any marked vertex $v \in V$, add a vertex $v^*$ and connect it to $v$ via an edge. For a constant $c_1$, attach $c_1t-1$ new degree one vertices to $v^*$ --- this makes the degree of $v^*$ equal to $c_1t$. Let this new graph be $\G$.  We denote the set of marked vertices by $\marked$ and the set $\G\setminus \cH$ by $\stared$. By saying a vertex $v$ is $\stared$ ($\marked$) we mean $v\in \stared$ ($v\in \marked$), and to indicate their numbers we use a preceding $\#$.  We call the $\stared$ vertices with degree $c_1t$ the $\stared$ centers.  Note that to any $\stared$ vertex we can associate a unique $\marked$ vertex.
  \end{defin}
In Lemma \ref{mixingG} we will show that the above construction does not change the  mixing time of $\cH$ drastically, i.e., $\tmix (\cH)\leq \tmix(\G)\leq c \tmix (\cH)$, for some constant $c$.

 \begin{lemma}\label{mixingG}
 Take $\cH$ an arbitrary graph with average degree $d=\omega(1)$, $\tmix(\cH) = o(|V(\cH)|)$ and 
 let $\G$ be obtained from $\cH$ after the decoration construction explained above. Then, with high probability over the random construction of $\G$, the mixing time of the lazy simple random walk on $\G$ satisfies:
 \begin{equation}
  c' \tmix (\cH)\leq \tmix(\G)\leq c \tmix (\cH),\text{ for some constant }c \text{ depending on } c_1.
 \end{equation}
 \end{lemma}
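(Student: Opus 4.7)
The plan is to establish the two inequalities $c' \tmix(\cH) \leq \tmix(\G) \leq c\,\tmix(\cH)$ separately. Writing $n = |V(\cH)|$, $m = |E(\cH)|$, and $t = \tmix(\cH)$, I would first use a standard concentration bound (Chernoff or Chebyshev, invoking the hypothesis $t = o(n)$) to conclude that, w.h.p.\ over the random marking, $\#\marked = \Theta(n/t)$. Hence $|V(\G)| = \Theta(n)$ and $|E(\G)| = m + \Theta(c_1 n)$; because $d = \omega(1)$ forces $m = \omega(n)$, the ratio $|E(\G)|/m$ lies in an interval depending only on $c_1$.

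\textbf{Lower bound via conductance.} I would apply a standard bottleneck (Cheeger-type) inequality of the form $\tmix = \Omega(1/\Phi_*)$, using a single decoration as the cut set. For any marked $v$, let $S_v \subseteq V(\G)$ consist of $v^*$ together with its $c_1 t - 1$ attached leaves. A direct computation gives
\begin{equation*}
\pi_\G(S_v) = \frac{2 c_1 t - 1}{2|E(\G)|} = \Theta\!\left(\frac{c_1 t}{|E(\G)|}\right),\qquad Q(S_v, S_v^c) = \pi_\G(v^*)\, P(v^*, v) = \frac{1}{4|E(\G)|},
\end{equation*}
so the conductance of $S_v$ is $\Theta(1/(c_1 t))$, whence $\tmix(\G) = \Omega(t) = \Omega(\tmix(\cH))$. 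The choice of $v$ with $\pi_\G(S_v) \leq 1/2$ is automatic w.h.p.\ since $\#\marked \to \infty$.

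\textbf{Upper bound via coupling.} Let $X$ denote the lazy walk on $\G$ and $\sigma_1 < \sigma_2 < \cdots$ its visit times to $V(\cH)$; the induced chain $Y_k := X_{\sigma_k}$ is a reversible Markov chain on $V(\cH)$ whose edge-weights satisfy $\pi_Y(u) P_Y(u,v) = 1/(2Z)$ on every edge $(u,v) \in E(\cH)$, with $Z = 2m + \#\marked$. Since $d = \omega(1)$ forces $\#\marked = o(m)$, the transitions of $Y$ agree with those of the lazy walk on $\cH$ up to a multiplicative factor $1 \pm o(1)$ on every edge, so one can couple $Y$ directly with the lazy $\cH$-walk so that the two chains meet in $O(t)$ induced steps. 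Two bookkeeping observations complete the argument: (i) from any starting vertex in a decoration the walk reaches $V(\cH)$ within $O(t)$ $\G$-steps w.h.p.\ (since each decoration is a star of size $\Theta(t)$), and (ii) the expected number of $\G$-steps per $Y$-step is $O(1)$, because a decoration excursion from a marked $v$ occurs with probability $1/(2(\deg_\cH(v)+1))$ and lasts $\Theta(c_1 t)$ steps, which averages against $\pi_Y$ to a correction $O(c_1/d) = o(1)$. Thus $\Omega(t)$ induced steps occur within $O(t)$ $\G$-steps, coupling the two $\cH$-projections; once they coincide, a final $O(t)$-step coupling inside the common star-shaped decoration finishes the proof.

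\textbf{Main obstacle.} The delicate step is the upper bound without a logarithmic overhead: a pure spectral-gap comparison of $Y$ with the lazy walk on $\cH$ yields only $\tau_{\text{rel}}(\G) = O(t)$, and converting this to a mixing-time bound via $\tmix \leq \tau_{\text{rel}} \log(1/\pi_{\min})$ costs a factor $\log|V(\G)|$, too much for the constant $c$ claimed by the lemma. This is why I would couple the two chains directly rather than through eigenvalues: since $Y$ and the lazy $\cH$-walk differ only at marked vertices, and only by an $O(1/d)$ fraction of per-step mass there, a union bound over the $O(t)$ coupling steps keeps the aggregate discrepancy $o(1)$, precisely because $d = \omega(1)$.
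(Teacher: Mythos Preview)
Your lower bound via the conductance of a single decoration is correct and, if anything, more explicit than the paper's one-line appeal to the Peres--Sousi hitting-time characterisation.

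The upper bound, however, has a real gap, and it is precisely the one you flag as the ``main obstacle'' but do not actually close. Both of your bookkeeping claims---that the per-step coupling discrepancy between $Y$ and the lazy $\cH$-walk is $O(1/d)$, and that the expected number of $\G$-steps per $Y$-step is $1+o(1)$---are statements about the \emph{stationary} average over $\pi_Y$, not about the trajectory of a walk started at an arbitrary vertex. At a marked vertex $v$ the one-step TV discrepancy is $\Theta(1/\deg_\cH(v))$ and the expected excursion cost is $\Theta(c_1 t/\deg_\cH(v))$; neither is uniformly $O(1/d)$, since the hypothesis is only on the \emph{average} degree. Your ``union bound over the $O(t)$ coupling steps'' therefore needs a bound of the form $\sum_{s<t}\sum_{v\in\marked} P_\cH^s(u,v)/\deg_\cH(v)=o(1)$, and there is no general inequality $\sum_{s<t}P^s(u,v)=O(t\,\pi(v))$ for reversible chains with mixing time $t$: when $u=v$ the $s=0$ term alone is $1$. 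In effect you are invoking ergodic averaging over a window equal to the mixing time you are trying to establish, which is circular.

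The paper avoids this by using the Peres--Sousi theorem $\tmix\asymp t^H(\alpha)=\max_{x,\,\pi(S)\ge\alpha}\E_x[\tau_S]$ on both $\G$ and $\cH$. Since the decorations carry only an $O(1/d)$ fraction of $\pi_\G$, any set $S$ with $\pi_\G(S)\ge 1/4$ satisfies $\pi_\cH(S\cap\cH)\ge 1/6$, so it suffices to compare $\E^\G_x[\tau_{S'}]$ with $\E^\cH_x[\tau_{S'}]$ for $S'\subseteq V(\cH)$. That comparison is done path-by-path: every $\G$-path to $S'$ is an $\cH$-path interleaved with excursions into decorations; each excursion costs $\Theta(c_1 t)$ in expectation, and each step of the $\cH$-path lands on a marked vertex with probability $1/t$, so Wald's identity gives $\E^\G_x[\tau_{S'}]\le O(c_1)\,\E^\cH_x[\tau_{S'}]$. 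The point is that hitting times of \emph{large} sets admit a direct pathwise comparison, with no need to control occupation times of individual (possibly low-degree) marked vertices.
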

To prove Lemma~\ref{mixingG}, we employ the following result by Peres and Sousi \cite{HitMix}:
 \begin{theorem}[Peres and Sousi\cite{HitMix}]\label{SP}
Let $\alpha < 1/2$. Then, there exist positive constants $c_{\alpha}$ and $c'_{\alpha}$ so that for the lazy random walk on any graph $\cal M$:
$ c'_{\alpha} t^{H}_{\cal M}(\alpha)\leq \tmix({\cal M})\leq  c_{\alpha}   t^{H}_{\cal M}(\alpha),$
where $ t^{H}_{\cal M}(\alpha):= \max_{
\substack{x\in {\cal M}, S\subseteq {\cal M};\\ \pi(S)\geq \alpha}
}\E(\tau_x(S))$, and $\tau_x(S)$ is the time to hit $S$ from $x$.
 \end{theorem}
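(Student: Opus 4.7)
The final statement is the Peres-Sousi characterization: up to constants depending on $\alpha$, $\tmix$ is equivalent to the worst-case hitting time of sets of stationary mass at least $\alpha$, for any $\alpha < 1/2$. The two inequalities have rather different flavors, and I would handle them separately.

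For the easy direction $\tmix({\cal M}) \geq c'_\alpha t^H_{\cal M}(\alpha)$, the plan is to exploit submultiplicativity of total variation. The standard bound $d(s+t) \leq 2\, d(s)\, d(t)$ combined with $d(\tmix)\leq 1/4$ gives $d(k\tmix) \leq 2^{-(k+1)}$. Choosing $k_\alpha = \lceil \log_2(2/\alpha)\rceil$ then yields $d(k_\alpha \tmix) \leq \alpha/2$, so for any starting state $y$ and any $S$ with $\pi(S)\geq\alpha$ one has $\cP_y(X_{k_\alpha \tmix}\in S) \geq \alpha/2$. Chopping time into blocks of length $k_\alpha \tmix$ and applying the Markov property, the probability of missing $S$ in $j$ consecutive blocks is at most $(1-\alpha/2)^j$, so $\E_x \tau_S \leq 2 k_\alpha \tmix/\alpha$. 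Taking the maximum over $x$ and $S$ gives the inequality with $c'_\alpha = \Theta(\alpha/\log(1/\alpha))$.

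For the hard direction $\tmix({\cal M}) \leq c_\alpha t^H_{\cal M}(\alpha)$, which is the main content of Peres-Sousi, I would proceed in two stages. First, (i) I would establish an Aldous-type equivalence $\tmix \asymp t^H_{\cal M}(1/2)$: using reversibility, bound $\tmix$ by the inverse spectral gap $\gamma^{-1}$ (absorbing the $\log(1/\pi_{\min})$ factor via the $L^2$-to-$L^\infty$ mixing argument, which in turn can be controlled via hitting times through Lyons-Peres-style identities), and relate $\gamma$ to hitting times of mass-$\tfrac12$ sets via the Dirichlet/Poincar\'e variational characterization applied to indicator-like test functions. Second, (ii) I would upgrade from $\alpha = 1/2$ down to arbitrary $\alpha < 1/2$ by showing that hitting times cannot shrink too much when the target set itself shrinks in mass. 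Given a worst pair $(x,A)$ with $\pi(A)\geq \tfrac12$ realizing $t^H(1/2)$, I would construct a nested sequence $A = A_0 \supseteq A_1 \supseteq \cdots \supseteq A_m$ with $\pi(A_j)\approx \pi(A_{j-1})/2$ until $\pi(A_m) \approx \alpha$, and at each halving step control the ratio $\E_x \tau_{A_{j-1}}/\E_x \tau_{A_j}$ using a reversibility-based hitting-time identity that relates $\E_\pi \tau_B$ for the nested sets $B$.

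The main obstacle is step (ii): a naive attempt to pass from a mass-$\tfrac12$ witness set down to a mass-$\alpha$ one can lose factors depending on $1/\pi_{\min}$, which the theorem does not tolerate (indeed, for $\alpha > 1/2$ the equivalence provably fails, so the argument has to be tight at the $\tfrac12$ threshold). Peres and Sousi's key innovation is an iterative shrinking scheme in which each halving loses only a universally bounded constant, so that a telescoping product yields a final factor depending only on $\alpha$. I would model this step directly on their paper, reproducing their hitting-time stability lemma and combining it with step (i) to obtain $c_\alpha$ polynomial in $1/(\tfrac12-\alpha)$, completing the proof.
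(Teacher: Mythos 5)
First, note that the paper does not prove this statement at all: Theorem \ref{SP} is imported verbatim from Peres and Sousi \cite{HitMix} and used as a black box, so there is no internal proof to compare yours against. Judged on its own terms, your sketch of the easy direction $\tmix({\cal M}) \geq c'_{\alpha}\, t^{H}_{\cal M}(\alpha)$ is correct and standard: submultiplicativity of the (worst-case) total variation distance gives $d(k_\alpha \tmix)\leq \alpha/2$ for $k_\alpha = O(\log(1/\alpha))$, hence each block of length $k_\alpha\tmix$ hits any set of stationary measure at least $\alpha$ with probability at least $\alpha/2$, and the geometric-trials bound $\E_x(\tau_x(S)) \leq 2k_\alpha \tmix/\alpha$ follows.

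The hard direction, however, is not a proof as written. Your stage (i) routes through the spectral gap, and the bound $\tmix \leq C\,\gamma^{-1}\log(1/\pi_{\min})$ carries exactly the $\log(1/\pi_{\min})$ factor that the theorem forbids; you assert this is absorbed ``via the $L^2$-to-$L^\infty$ argument \dots controlled via hitting times,'' but no mechanism is given, and it is doubtful one exists along this route --- Peres and Sousi avoid the spectral gap entirely, instead passing through Aldous' stationary stopping-time parameter $t_{stop}$ (using $\tmix \asymp t_{stop}$ for lazy reversible chains) and bounding $t_{stop}$ by hitting times of large sets via an explicit stopping-time construction. Your stage (ii), the nested-halving scheme with a ``hitting-time stability lemma,'' is the actual mathematical content of the theorem --- it is where the $\alpha<1/2$ threshold enters and where the argument must be tight, since the statement provably fails for $\alpha>1/2$ --- and you explicitly defer it to ``reproducing their lemma,'' which is precisely the part a blind proof must supply. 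As it stands, the proposal establishes one inequality and only gestures at the other.
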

 
 \begin{proof}[Proof of Lemma~\ref{mixingG}]
 By Theorem \ref{SP} the first inequality holds. We prove the second inequality, 
by applying Theorem \ref{SP} to a lazy random walk on $\G$. Take $\alpha=1/4$, and let   $n$ be the order of $\G$.
 Note that  the $\stared$ vertices have average degree $2-O(1/t) \le (2c_1t-1)/c_1t\le 2$. By applying the Chernoff bound on the number of $\marked$ vertices in $\G$ we have:
$$ \cP (\# \marked >2n/t)\leq e^{-n/t}.$$
Thus, with high probability $\# \stared \leq 2c_1n$ and $\Pi^1($\stared$)\leq 4c_1/(d+4c_1)$, and for $d>44c_1$  any set $S$ satisfying $\pi(S)\geq 1/4$ contains vertices in $\cH$ such that $\Pi^1(\cH\cap S)\geq 1/6$. 
Moreover, for any arbitrary $S$  we have,
 $\max^{\G}_{v\in \G}\E(\tau_v (S))=\max \left\{\max_{v\in \cH}\E^{\G}(\tau_v (S))\right.$, $\left. \max_{v\in \G\setminus \cH}\E^{\G}(\tau_v (S))\right\}.$\footnote{
When talking about $\E$ or $\cP$ if the graph that we are referring to is not clear, we use a superscript for $\E$ as well as  for $\cP$ to denote the underlying graph. Thus, $\E^{\G}(\tau_v(S))$ means the expected time to get from $v$ to $S$ in $\G$. If we are taking the expected value of a random variable $X$ over a set $S$ we use subscript, i.e., $\E_{S}(X)= \sum_{s\in S}X(s) \cP{(S)}$, and $\E^{\G}_{S}(X)= \sum_{s\in S}X(s) \cP^{\G}{(S)}$.
 }
 
 We first assume $v\in \cH$, and bound $\E^{\G}(\tau_v (S))$ when $ \Pi^1(S)\geq 1/6$. Employing Theorem \ref{SP}, we get $t^{H}_{\cH}(1/6)\leq (1/c'_{1/6})  \tmix(\cH)$. 
 We want to compare the length of the paths from $v$ to $S$ in $\G$ and $\cH$.
 We denote the set of all such paths in $\G$ by $\Gamma_{v\rightarrow S}(\G)$ and in $\cH$  by $\Gamma_{v\rightarrow S}(\cH)$. For any $l'\in \Gamma_{v\rightarrow S}(\G)$, we will get a path from $\Gamma_{v\rightarrow S}(\cH)$  by removing its $\stared$ vertices. Consider an arbitrary $l\in \Gamma_{v\rightarrow S}(\cH)$, and all the paths in $\Gamma_{v\rightarrow S}(\G)$ which can be associated to $l$ by removal of $\stared$ vertices as mentioned above. We call these paths $l$'s extensions, and we denote the set they constitute by ${\cal E}x(l)$.

Consider an arbitrary $v\in \cH$,  $S\subseteq \cH; \Pi^1(S)\geq 1/6$ and $l\in \Gamma_{v\rightarrow S}(\cH)$.  Let the $\marked$ vertices on $l$ be $v_1, v_2 , \dots , v_{x_l}$, and $S_1,S_2,\dots S_{x_l}$ be the $\stared$ vertices  that are connected to $l$ respectively through $v_1,\dots ,v_{x_l}$. We know that $\E_{l\in \Gamma_{v\rightarrow S}({\cal H}); \vert l\vert=\rho}(x_l)=\rho /t$, and using linearity of expectation we will have $\E_{l'\in {\cal E}x(l) }(\vert l'\vert)= l+ \sum_{i=1}^{x_l}\E_{l'\in {\cal E}x(l) }(\vert l'|_{S_i}\vert)$, where $l'|_{S_i}$ is the part of $l'$ that lies in  $S_i$. Employing Wald's equation we will have:
 $\sum_{i=1}^{x_l}\E_{l'\in {\cal E}x(l) }(\vert l'|_{S_i}\vert)$ $= \E(x_l)\E_{l'\in {\cal E}x(l) }(\vert l'|_{S_1}\vert)$.
Thus:

\begin{equation}
\begin{array}{lllll}
\E^{\G}(\tau_v(S))&=\sum_{l'\in \Gamma_{v\rightarrow S}(\G)}(\vert l' \vert)\cP^{\G}(l')\\
\smallskip
&=\sum_{l\in \Gamma_{v\rightarrow S}(\cH)}\sum_{l'\in {\cal E}x(l)}(\vert l' \vert)\cP^{\G}(l'\vert l'\in {\cal E}x(l) )\cP^{\cH}(l)\\
\smallskip

&=\sum_{l\in \Gamma_{v\rightarrow S}(\cH)}\left(\vert l\vert+\sum_{i=1}^{x_l}\E_{l'\in {\cal E}x(l) }(\vert l'|_{S_i}\vert)\right)\cP^{\cH}(l)\\
\smallskip

&\leq\sum_{\rho=1}^{\infty}
\sum_{ \substack{{l\in \Gamma_{v\rightarrow S}(\cH)}\\\vert l\vert=\rho} }
\left(\vert l\vert+ \E(x_l) 4c_1t\right) \cP^{\cH}( l)\\
&& \text{\footnotesize The time to hit $\cal H$ from a $\stared$ vertex }\\

 &&\text{\footnotesize   is a geometric random variable with  }\\
 &&\text{\footnotesize probability $1/2c_1t$ of success, each step} \\
&&\text{\footnotesize taking $2$ time units}\\ 
&=\sum_{\rho=1}^{\infty}
\left(\rho + \frac{\rho}{t}4c_1t)\right)\cP^{\cH}(\vert l\vert=\rho) \\
\smallskip
&\leq \sum_{\rho=1}^{\infty}
(5c_1\rho)\cP^{\cH}(\vert l\vert=\rho)= 5c_1~ \E^{\cH}(\tau_v (S)).
\end{array}
\end{equation}

 \smallskip

  We now assume  
 $v\in \G\setminus \cH$. Let $w$ be the $\marked$ vertex in $\cH$ closest to $v$. By linearity of expectation we have: 
  $\E(\tau_v(S))= \E(\tau_v(w))+  \E(\tau_w(S))$. The time to hit $w$ is a geometric random variable with probability $1/2c_1t$ of success and each of whose steps take $2$ time units, thus $\E(\tau_v(w))=4c_1t$. Following the same reasoning as given in the previous paragraph we know $\E(\tau_w(S))\leq (5c_1+4c_1)t$.

 \vspace{0.3 cm}
  
   Putting all the above together:
  $$\begin{array}{ll}
  \max_{v\in \G}\E(\tau_{v}(S))=\max \left\{\max_{v\in \cH}\E(\tau_{v}(S)) , \max_{v\in \G\setminus \cH}\E(\tau_{v}(S))\right\}\\ \hspace{6cm}\leq \max \left \{5t c_1/c'_{1/6}, 9tc_1/c'_{1/6}\right\}\\
  \hspace{6cm}= c t.
  \end{array}
  $$
  
  Another application of  Theorem \ref{SP} will yield the result. 
\end{proof}
We now introduce the random graph to which we will apply the decoration construction, and which will be at the heart of our lower bounds. 
\begin{defin} \label{defingraph}
We define the graph $\cH_{n,d,\psi}$ as follows: given arbitrary parameters $n$, $d$, and $0<\psi<1$, take two \ER graphs $H_1=\langle V_{H_1},E \rangle$ and $H_2=\langle V_{H_2},E \rangle$ with parameters $\langle n,d/n \rangle$.  
Choose $\psi n$ vertices uniformly at random from $V_{H_1}$ namely $v_1, v_2,\dots v_{\psi n}$, and then $\psi n$ vertices uniformly at random from $V_{H_2}$ namely $u_1, u_2,\dots n_{\psi n}$. Select a uniformly random permutation $\sigma$ of $\psi n$ numbers and put an edge between the vertices $v_i$ and $u_{\sigma(i)}$ for each $1\leq i\leq \psi n$. 
\end{defin}

With the following lemma we bound the mixing time of $\cH_{n,d,\psi}$.
\begin{lemma}\label{Hmixing} 
In the graph  $\cH_{n,d,\psi}$; where  $ 0<\psi<1$, and $d$ or $\psi$ can be a function of $n$, if $d=\omega(\log^2 (n))$, we have: $\tmix(\cH_{n,d,\psi})=\bigO(\frac{\log n}{\log d}+ \frac{d}{\psi})$, and $\davg={\Theta}(d)$. 
\end{lemma}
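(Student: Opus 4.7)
The plan is to verify the average-degree claim via a routine Chernoff argument, and to bound the mixing time of $\cH_{n,d,\psi}$ by combining the fast internal mixing of each \ER half with a bottleneck analysis of the $\psi n$ matching edges.

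First, for the average degree, I would observe that since $H_1, H_2$ are drawn from $\ER$ with parameters $\langle n, d/n\rangle$ and $d = \omega(\log^2 n)$, a standard Chernoff bound gives that each has $(1\pm o(1))\,nd/2$ edges with high probability; the matching contributes exactly $\psi n$ further edges. Since $\cH_{n,d,\psi}$ has $2n$ vertices and $\psi \le 1 \le d$, this immediately yields $\davg = \Theta(d)$.

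For the mixing-time bound I would proceed in three steps. Step one: invoke known fast-mixing results for dense \ER graphs (e.g.\ Fountoulakis--Reed or Benjamini--Kozma--Wormald) to conclude that $\tmix(H_i) = O(\log n / \log d)$ w.h.p.\ for $i = 1, 2$; the $\omega(\log^2 n)$ degree assumption is what makes these theorems apply cleanly. Step two: analyse the natural two-state projection chain obtained by recording which of the two halves the walk currently occupies. The stationary probability of sitting on a matched vertex in $H_1$ is $(\psi/2)(1+o(1))$, because the $\psi n$ matched vertices on that side each have degree $d(1+o(1))$; from such a vertex the lazy walk traverses the matching edge with probability $\Theta(1/d)$. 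Consequently the per-step crossing probability is $\Theta(\psi/d)$, so the two-state projection chain mixes in $O(d/\psi)$ steps. Step three: combine these two time scales. My preferred route is a Markov-chain decomposition argument in the style of Jerrum--Son--Tetali--Vigoda / Madras--Randall, exploiting the fact that internal mixing is much faster than the crossing rate in order to extract the additive bound $O(\log n / \log d + d/\psi)$. A self-contained alternative is a coupling: run two coupled walks, force them into lockstep once they occupy the same half, and bound the time for them to first land in the same half by analysing the product of two two-state crossing chains.

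The main obstacle I expect is Step three: naive decomposition theorems yield a multiplicative bound rather than the claimed additive one. The saving is that in the interesting regime, where $d/\psi$ dominates $\log n/\log d$, after $O(\log n/\log d)$ steps the walk is already very close to stationary within whichever half it currently occupies; what remains is only the equilibration between the two halves, which is governed by the two-state projection and proceeds on the slower timescale $O(d/\psi)$. Formalising this ``mix internally first, then equilibrate on the projection'' picture, either via a Martinelli-type restriction/projection decomposition or by a direct case split on the number of crossings performed within a run of length $O(\log n / \log d + d/\psi)$, should deliver the additive bound claimed in the lemma.
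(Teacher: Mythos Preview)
Your proposal is plausible but takes a different and harder route than the paper. The paper avoids decomposition theorems and couplings altogether by invoking the Peres--Sousi hitting-time characterisation of mixing (Theorem~\ref{SP} in the paper): it suffices to bound $\max_{w,S:\pi(S)\ge 1/4}\E(\tau_w(S))$. For any such $S$, its heavier half lies in one side, say $H_1$; then for $w\in H_2$ one simply writes $\E(\tau_w(S))\le \E(\tau_w(H_1)) + \max_{z\in H_1}\E(\tau_z(S\cap H_1))$. The first summand is $O(d/\psi)$ because crossing is essentially a geometric waiting time with success probability $\Theta(\psi/d)$, and the second is $O(\log n/\log d)$ by applying Peres--Sousi once more inside $H_1$, using Hildebrand's mixing bound for dense \ER graphs. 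The additive form of the bound thus falls out of the additivity of expected hitting times, with no decomposition machinery needed.

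By contrast, the obstacle you yourself identify in Step~3 is genuine: Madras--Randall and Jerrum--Son--Tetali--Vigoda give multiplicative bounds in general, and your ``mix internally first, then equilibrate the projection'' heuristic, while morally correct, would need real work to make rigorous (in particular, the projection onto $\{H_1,H_2\}$ is not itself Markov, so the two-state analysis is only a heuristic). The hitting-time route sidesteps all of this. One minor point: Fountoulakis--Reed and Benjamini--Kozma--Wormald treat the sparse/giant-component regime; for $d=\omega(\log^2 n)$ the graph is connected w.h.p.\ and the relevant citation is Hildebrand, which the paper uses.
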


\begin{proof}
 Once again, we use Theorem \ref{SP} by Peres and Sousi.  We will show that $t^{H}_{\cH_{n,d,\psi}}(1/4)=\bigO(\frac{\log n}{\log d}+ \frac{d}{\psi} )$. Consider a set $S$ with $\pi(S)\geq1/4$, and assume the heavier part of $S$ is in $H_1$. Consider arbitrary $w\in H_2$, we have $\E(\tau_w(S))\leq E(\tau_w(S\cap H_1))\leq \E(\tau_w(H_1))+\max_{z\in H_1}\E(\tau_{z}(S\cap H_1)) $. By  a result of Hildebrand (\cite{Hilder}), we know if $d=\omega(\log^2 n)$ then $\tmix(H_1)=\tmix(H_2)= \bigO(\log n/\log d)$.  Thus, employing Theorem \ref{SP}, we know that $ \max_{z\in H_1} \E(\tau_{z}(S\cap H_1))\leq c_{1/8} \frac{\log n}{\log d}$. Note that the number of steps needed to hit $H_1$ from a $v\in H_2$ is a geometric random variable with probability $\psi/d$ of success, thus $\E(\tau_w(H_1))=d/\psi$. Therefore, $\E(\tau_w(S))\leq c_{1/8} \frac{\log n}{\log d}+\frac{d}{\psi}$. Since $w\in H_1$ and the heavier part of $S$ in $H_2$ is the worst case, we can conclude the result. 


The average degree for each vertex is $d+\psi$, since $0<\psi<1$, it is obviously $\Theta(d)$.
\end{proof}


In the rest of this Section, we use the following terminology: 
\begin{defin}When a vertex is queried by an algorithm we say it is $\quer$ or it belongs to the set $\quer$. By saying a vertex is $\obs$, or that it belongs to the set $\obs$, we mean that at least two of its neighbors have been $\quer$.  \end{defin}
We write $\obs$ because, if the algorithm queries two neighbors of a vertex $v$, then the algorithm can obtain information about $v$ (without querying $v$)  --- indeed, if the two neighbors have degree more than $1$  then necessarily $v$ is not $\stared$. Thus, if a vertex is $\obs$ or $\quer$ we will be able to decide with certainty whether or not it is $\stared$.  On the other hand,  if a vertex is not $\obs$, nor $\quer$ there is no information available about it.
 
 \smallskip
 
We now show that, with large enough probability, if few queries are performed then no vertex will be $\obs$, and the number of edges that will be revealed will be relatively small.
 \begin{lemma}\label{tree}
  Let $\G$  be the graph obtained from $\cH_{n,d,\psi}$ after employing the decoration  construction.
Consider an algorithm that traverses the edges of $\G$ and queries $q$ vertices of it. The probability of having at least one $\obs$ vertex which has not been $\quer$ is at most  ${{q}\choose{2}} \frac{(d+\psi)^2}{n} $.
 Furthermore, the $\quer$ vertices will constitute an induced graph  in $\G$ which in expectation has  $E_{\quer}+(q-1)$ edges where $E_{\quer}$ is a random variable with expected value $q^2\left(\frac{d}{n} \right)$, and   thus, $\cP\left(E_{\quer}\geq 100 q^2\left(\frac{d}{n} \right)\right)\leq 1/100$. \end{lemma}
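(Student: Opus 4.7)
The plan is to prove both parts via a deferred-decisions coupling between the algorithm's execution and the generation of $\cH_{n,d,\psi}$: each random edge (in either Erd\H{o}s-R\'enyi part, and each coordinate of the random matching $\sigma$) is revealed only when the algorithm first queries one of its endpoints, and until then unrevealed edges remain independent with their prior probabilities while the residual matching is uniform on the still-unmatched vertices. Two observations render the decoration construction harmless for the first claim: a star-leaf has degree $1$ and hence cannot serve as a common neighbor of two distinct \quer\ vertices, and a star-center has a unique neighbor in $\cH$, so the algorithm can only learn the center's id by first querying its \marked\ vertex and can only learn a leaf's id by first querying the center; therefore no \stared\ vertex can become \obs\ without first being \quer. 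Hence it suffices to analyze common-neighbor events inside $\cH_{n,d,\psi}$.

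For the first claim I will union-bound over the $\binom{q}{2}$ pairs of \quer\ vertices. For a fixed pair $x\neq y$, by linearity the expected number of common neighbors $w$ in $\cH_{n,d,\psi}$ is at most $n \cdot \bigl((d+\psi)/n\bigr)^{2} = (d+\psi)^{2}/n$, since each candidate $w$ is adjacent to $x$ with probability at most $(d+\psi)/n$ (summing ER and matching contributions) and, independently, adjacent to $y$ with probability at most $(d+\psi)/n$; the two edges $xw,yw$ are distinct, so the product bound applies. Markov gives $\cP(x,y \text{ share a common neighbor}) \leq (d+\psi)^{2}/n$, and a union bound over pairs yields the stated $\binom{q}{2}(d+\psi)^{2}/n$. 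To handle the fact that the pair is adaptively chosen, I will invoke the coupling at the moment the later of the two vertices is queried: conditioning on the revealed history, edges from that vertex to vertices the algorithm has never interacted with are still distributed as in the prior, so the per-pair estimate holds uniformly over strategies.

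For the second claim, decompose the edges of the induced subgraph on \quer\ into (i) the $q-1$ ``traversal'' edges through which the algorithm reached each new vertex from the current queried component, and (ii) the additional edges $E_{\quer}$ between \quer\ pairs not used in the traversal. Part (i) is forced by the traversal, accounting for the $(q-1)$ term. For (ii), each of the at most $\binom{q}{2}$ remaining pairs carries an ER/matching edge independently of the previously revealed structure, with probability at most $(d+\psi)/n \leq 2d/n$, so linearity of expectation yields $\E[E_{\quer}] \leq q^{2}d/n$, and Markov's inequality gives $\cP(E_{\quer} \geq 100\,q^{2}d/n) \leq 1/100$.

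The main technical obstacle is making the deferred-decisions coupling precise against an adaptive algorithm: one must argue that, at the instant each new query is issued, the edges of the new vertex to the as-yet-untouched part of the graph are still independent Bernoullis with parameter $d/n$ (and the residual matching is still uniform on the unmatched vertices), so that the algorithm cannot use its partial view to steer future queries toward a common-neighbor configuration with a previous query. Once this coupling is set up, both parts reduce to a routine application of linearity of expectation and Markov's inequality.
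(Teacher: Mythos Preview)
Your proposal is correct and follows essentially the same route as the paper: a union bound over the $\binom{q}{2}$ pairs of \quer\ vertices with the per-pair common-neighbor bound $n\bigl((d+\psi)/n\bigr)^{2}=(d+\psi)^{2}/n$, and then linearity of expectation plus Markov for $E_{\quer}$ after separating off the $q-1$ traversal edges. If anything, you are more careful than the paper on two points it leaves implicit---you explicitly argue that no \stared\ vertex can become \obs\ without first being \quer\ (so the analysis may be carried out in $\cH_{n,d,\psi}$), and you spell out the deferred-decisions coupling that justifies the per-pair edge/common-neighbor probabilities against an adaptive querier.
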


\begin{proof}
Any $\obs$ vertex is connected to at least two distinct $\quer$ vertices. Thus, the probability of having an $\obs$ vertex is equal to the probability that at least two distinct $\quer$ vertices share one neighboring vertex. i.e. $\exists v,u\in \quer; N_u\cap N_v\neq \emptyset$.
In  $\cH_{n,d,\psi}$, if two vertices belong to the same  \ER graph the probability of an edge being present between them is $d/n$ and if the vertices belong to two different \ER graphs the probability of having an edge between them is $\psi /n$. Thus, the probability of having an edge is  $\max \{\frac{d}{n},\frac{\psi}{ n}\}$. Thus, the probability that two specific vertices share at least one neighbor is less than: $n (\frac{d}{n}+\frac{\psi}{n})^2 =(d+\psi)^2/n$.
By the union bound, among the $q$ $\quer$ vertices, the probability of having at least two vertices sharing a neighbor is less than or equal to:  ${{q}\choose{2}} \frac{(d+\psi)^2}{n} $. 

\smallskip

For the second part 
assume we have queried $q_1$ vertices from $H_1$ and $q_2$ vertices in $H_2$. The expected number of edges that we have not traversed in our walk but exist between $\quer$ vertices
 is: ${{q_1}\choose{2}}\frac{d}{n}+{{q_2}\choose{2}}\frac{d}{n} + q_1q_2\left( \frac{\psi}{n}\right)$.  
 
 Thus, if $\#E_{\quer}$ is the number of  edges between the $\quer$ vertices that have not been traversed, We have $\E(\#E_{\quer})\leq q^2\left( \frac{d}{n} +  \frac{\psi}{2n} \right) $ using the Markov inequality, we will get the result. 

 
\end{proof}

Finally, we prove the following lemma, that will be the heart of the proofs of  Theorems \ref{main21},  \ref{main22}, and \ref{main23}:
 \begin{lemma}\label{finalLemma}
Consider arbitrary $n$, $d>\omega(\log n)$, $\Omega(\log n)<t<o({{n}}/{d^2})$, so that $t/d=\Omega(1)$,
take $\G$ to be the graph obtained from the decoration construction applied to  $\cH_{n,d,d/t}$\footnote{
If $c > d/t\geq 1$, for constant $c$, let $\psi = d/tc$. }. By Lemma \ref{Hmixing}, we know that $\tmix (\G)=\Theta(t)$ and $\davg=\Theta(d)$.

\smallskip

If, instead, $t=O(\log n/\log d)$, and $d=\Theta(\log^d n)$, take $\G$ to be the decorated version of an \ER graph with parameters $\langle n, d/n \rangle$.

\smallskip

Then,
\begin{itemize}

\item if an algorithm traverses the edges of $\G$ and queries at most  $t d/c$ vertices of $\G$; $c$ being a constant, then with probability at least $99/100- 202/(c-1)$ there is no $\stared$ vertex among its $\quer$ vertices.
\item  If an algorithm traverses the edges of $\G$ and queries $q\leq \frac{n}{cd}$ vertices of $\G$; $c$ being a constant, then with probability  $1- o(1)$ the expected number of $\stared$  centers which have been queried is less than $ \frac{8c}{c-1}\left(\frac{q}{dt}\right)$.
\end{itemize}
 \end{lemma}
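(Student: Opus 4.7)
The plan is to prove both bullets simultaneously by combining Lemma~\ref{tree}, which controls the structure of the queried subgraph, with a principle-of-deferred-decisions argument on the random marking. First, I apply Lemma~\ref{tree} to condition on the joint event $G$ that (i) no $\obs$ vertex exists among the queried vertices, and (ii) the number of edges inside the queried subgraph is at most $(q-1)+100 q^2 d/n$. By Lemma~\ref{tree} the total failure probability is at most $\binom{q}{2}(d+\psi)^2/n + 1/100$; under the assumptions $t=o(n/d^2)$, $d=\omega(\log n)$, and the larger budget $q\le n/(cd)$ (which also covers $q\le td/c$), the first term is $o(1)$. Conditioned on $G$, the queried subgraph is a tree plus $o(q)$ extra edges, so each newly queried vertex $v_i$ has a well-defined ``parent'' $u_i$ in the exploration tree.

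Next, since the marks $M_v\sim\mathrm{Bernoulli}(1/t)$ are i.i.d.\ and independent of the $\cH$-edges, I defer sampling $M_v$ until $v$ is first queried. When $v$ is queried and $M_v=1$, we create $v^*$ and its $c_1t-1$ leaves, with $v^*$ appearing among $v$'s revealed neighbors. Because labels are arbitrary, the identity of $v^*$ is uniform among the $\deg_\G(v)$ entries of $v$'s neighbor list from the algorithm's perspective --- formally, one may randomly permute vertex labels without changing any algorithm's success probability. Crucially, a starred center $v^*$ enters the frontier only after its marked parent $v\in\cH$ is queried, and a leaf enters only after its starred center is queried. Thus, for the $i$-th query to hit a starred center, the algorithm must pick as $v_i$ some neighbor of a previously queried $u_i\in\cH$ which happens to be $u_i^*$. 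By the symmetry above, this conditional probability is at most
\begin{equation}
\cP(M_{u_i}=1)\cdot\frac{1}{\deg_\G(u_i)} \;\le\; (1+o(1))\cdot\frac{1}{td},
\end{equation}
using concentration of the ER degree around $d$ (Chernoff plus a union bound, valid since $d=\omega(\log n)$, and absorbable into $G$).

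Summing over the $q$ queries by linearity yields $\E(\#\text{queried starred centers})\le (1+o(1))q/(td)$. After absorbing the constants from conditioning on $G$ and the degree-concentration slack, this gives the claimed $\tfrac{8c}{c-1}\cdot q/(dt)$ bound for the second bullet. For the first bullet with $q\le td/c$, the expectation becomes $(1+o(1))/c$, so Markov's inequality bounds the probability of querying at least one starred center by $O(1/c)\le 202/(c-1)$ for $c$ large enough. Since no starred center being queried implies no leaf ever enters the frontier, this extends to no starred vertex being queried at all; combined with $\cP(G^c)\le 1/100$, this yields the stated $99/100-202/(c-1)$.

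The main obstacle I anticipate is rigorously justifying the symmetry claim, namely that no adaptive, label-dependent algorithmic strategy can beat $1/\deg_\G(u)$ per attempt at identifying a marked vertex's starred child. The essence of the argument is that degree-based identification itself requires querying the candidate, which is precisely the event we are trying to bound; equivalently, one may uniformly permute the labels of not-yet-queried vertices without altering any algorithm's success probability. A smaller bookkeeping concern is leaves: once $v^*$ is queried, all its $c_1t-1$ leaves enter the frontier, but each leaf query reveals only $v^*$ (already known) and never aids in discovering further starred centers, so leaf queries only weaken an adversarial algorithm's position and do not complicate the analysis.
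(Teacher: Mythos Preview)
Your proposal is essentially the same approach as the paper's proof: both invoke Lemma~\ref{tree} to control the queried subgraph, defer the Bernoulli$(1/t)$ marks, exploit the symmetry of $u^\star$'s position among $u$'s neighbor slots, use Chernoff to lower-bound all encountered degrees by $(1-o(1))d$, and finish the first bullet with a union/Markov bound. The paper's bookkeeping is slightly different --- it groups queries by parent $v_k$, uses the bound $|S_k|\le d_k$ and the edge-count part of Lemma~\ref{tree} to control $\sum_k d_k$, and works with the \emph{effective} degree $\deff(v_k)$ (the number of neighbors not yet ruled out as $v_k^\star$) rather than $\deg_\G(v_k)$ --- whereas you sum a per-query contribution $\approx 1/(td)$ directly over $q$ queries.

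One point deserves more care in your write-up. Your per-query bound $\cP(v_i=u_i^\star\mid M_{u_i}=1)\le 1/\deg_\G(u_i)$ is not valid \emph{conditionally} once the algorithm has already eliminated some of $u_i$'s neighbors (by having queried them, or by their being $\obs$): the correct denominator is then $\deff(u_i)$, which can be strictly smaller. This is exactly why the paper tracks $\deff$ and why the ``no unqueried $\obs$ vertex'' part of event $G$ matters --- under $G$ the only eliminated neighbors are queried ones, and the edge-count part of $G$ ensures there are only $o(q)$ such back-edges in total, so $\deff(u_i)=(1-o(1))\deg_\G(u_i)$ uniformly. You condition on $G$ but never actually use either component of it in your argument; making this connection explicit (or, equivalently, replacing $\deg_\G$ by $\deff$ throughout) would close the gap. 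With that fix your route is a clean simplification of the paper's, since you can use $\sum_k |S_k|=q-1$ directly instead of the looser $\sum_k d_k$.
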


\begin{proof}
 Consider all the $\quer$ vertices, and assume they are queried in the following order: 
 $v_1,v_2,\dots, v_q$, where $q\leq dt/c$. 
 For each $i$, let $d_i$ be the number of  $\quer$ neighbors of $v_i$  i.e. $d_i:=\vert N_{v_i}\cap \quer\vert$. By Lemma \ref{tree}, with probability  $99/100$ we have, $\sum_{i=1}^{q}d_i\leq2\left(100q^2\left(\frac{d}{n} +  \frac{\psi}{2n} \right)+ (q-1)\right)$. Note that having $q=o(n/d)$, we will have: $\sum_{i=1}^{q}d_i\leq2( 101q-1)$. For each $v_i$ let $\sigma(v_i)$ be the vertex $v_j$ with minimum $j$ which is adjacent to $v_i$.  
 
 From now on we will abuse the notation and by saying a $\stared$ vertex we mean only a $\stared$ center. Note that to query any $\stared$ vertex we need to query the center  first thus not querying any center is equivalent to not querying any $\stared$ vertex. To query a $\stared$ vertex we need to query its $\marked$ neighbor. Thus, $\cP(v_i  \in \stared)= \cP(\sigma(v_i) \in   \marked)\cP(v_i\in  \stared\vert\sigma(v_i)   \in\marked).$ 
 Let   $S_k$ be the set of $v_i$s which have been $\quer$ and satisfy $\sigma(v_i)=v_k$. 
  Note that $\cP(\exists v\in S_k \cap  \stared\vert\sigma(v_k)   \in\marked)$ is at most  $d_{v_k}/\deff(v_k)$ where $\deff(v_k)$ is the number of $v_k$'s neighbors which have not been $\obs$.

 We will have: 

  $$
 \begin{array}{l}
\hspace{0.94cm} \E(\# \text{{ \quer  }}\cap \stared )=\\
 \hspace{3cm}= \sum_{i=1}^{q}  \cP(\text{\tiny Having a  } \stared \text{\tiny vertex in }S_i )\\
  \hspace{3cm}=\sum_{i=1}^{q}\cP(\text{\tiny Having a  } \stared \text{\tiny vertex in }S_i\vert v_i \space\text{ \tiny is }\marked)\cP(v_i \text{ \tiny is } \marked)\\
   \hspace{3cm}\leq\sum_{i=1}^{q}(\frac{ d_i}{\deff(v_i)})(\frac{ 1}{t}).
  \end{array}
 $$

Note that for any $i$, $\E(\deff(v_i))=d(1-\frac{qd}{n})$, thus by Chernoff bound the probability that a single vertex in an \ER graph has effective degree less than $1/2d(1-qd/n)$ is at most $e^{-d(1-qd/n)/8}$. Considering all the $q$ vertices this probability will be $qe^{-d(1-qd/n)/8}$, plugging in the values for $q$ and $d$ we will have: 
$$
\begin{array}{ll}
\cP\left(\text{\tiny any vertex among the }\quer \text{ \tiny has degree less than }\frac{d}{2}(1-\frac{qd}{n})\right)\leq q e^{-\frac{d}{8}(1-\frac{qd}{n})}\\
\hspace{9.43cm}\leq q \left(e^{-\frac{d}{8}(1 -o(1))}\right)\\
\hspace{9.43cm}= o(1).\\ \text{\hspace{6cm} given } d=\omega(\log n ) \text{, and } q=o(\frac{n}{d}).
\end{array}$$

Hence, with probability $1-  o(1)$, if $q=td/c$ we have:

  $$
 \begin{array}{l}
 \cP(\text{\tiny{ Having a  }} \stared \text{\tiny vertex} )=\\
  \cP(\text{\tiny{ Having a  }} \stared \text{\tiny vertex} \vert \sum_{i=1}^{q} d_i\leq 202q)\cP(\sum_{i=1}^{q} d_i\leq 202q)+\cP(\sum_{i=1}^{q} d_i\geq202q)=\\
 
   \hspace{5cm}\leq\sum_{i=1}^{q}(\frac{ d_i}{\deff(v_i)})(\frac{ 1}{t})(\frac{99}{100})+\frac{1}{100}\\
      \hspace{5cm}\leq\sum_{i=1}^{q}(\frac{ d_i}{d(c-1)/2c})(\frac{ 1}{t})(\frac{99}{100})+\frac{1}{100}\\

            \hspace{5cm}\leq (\frac{99}{100})\frac{2c}{dt(c-1)}\sum_{i=1}^{q} d_i+\frac{1}{100}\\
       
                 \hspace{5cm}     \leq (\frac{99}{100})\frac{202qc}{dt(c-1)}+\frac{1}{100}\leq \frac{1}{100}+\frac{202}{c-1}.
  \end{array}
 $$
 
 If $q$ is arbitrary and yet less than $o(n/d)$ by repeating the above calculations, and  employment of Lemma \ref{tree} we get: 
 
  $$
 \begin{array}{l}
 \E(\# { \quer  }\cap \stared  )\leq
                   \frac{2}{td(1-q(d/n))}\E(\sum_{i=1}^{q} d_i)
                    \leq \frac{8q}{dt(1-o(1))} . \end{array}
 $$

\end{proof}
 
We can finally prove our three main Theorems:\begin{itemize}
\item  \emph{Proof of Theorem \ref{main21}.}
 Consider the two graphs $\G_1$ and $\G_2$, $\G_1$ being the graph of Lemma \ref{finalLemma} with $c_1=1$ and 
 $\G_2$ the same graph without the $\stared$ vertices (the graph before the decoration construction). 
Any algorithm which queries less than $\tmix\davg/c$ vertices of $\G_1$ or $\G_2$ will fail to distinguish between them with probability at least $\frac{1}{100}+\frac{202}{c-1}$.
Let $\Pi^0_{\G_i}$ be the uniform distribution on vertices of $\G_i$; $i=1,2$. In $\G_1$ with high probability we have at least $2\vert V_{\G_1}\vert $ $\stared$ vertices. 

 \begin{itemize}
\item \emph{Part 1.} Note that $\vert\Pi^0_{\G_1},\Pi^0_{\G_2}\vert_1\geq 1/4$. 
 Thus, if the natures selects $\G_1$, any algorithm $\cal A$ which aims to outputs $\Pi^0_{\G_1}$ will return a distribution $\Pi_{\cal A}$ satisfying 
$\vert\Pi^0_{\G_1},\Pi_{\cal A}\vert_1\geq 1/4-\frac{1}{100}-\frac{202}{c-1}$.

 \item\emph{ Part 2.}
  For $\zeta>1$, let $\Pi^\zeta_{\G_i}$ be the probability  distribution on vertices of $\G_i$; $i=1,2$ weighing each vertex $v$ proportional to $deg(v)^{\zeta}$.  If we take a sample from distribution $\Pi^\zeta_{\G_1}$ it will be $\stared$ vertex with probability $(t^{\zeta-1} +1)/2(t^{\zeta-1} +d^\zeta+1)$. Thus, for  $t^{\zeta-1} \geq d^\zeta$ we have, 
  $\vert\Pi^{\zeta}_{\G_1},\Pi^{\zeta}_{\cal A}\vert_1\geq 1/4$. Thus, 
  $\vert\Pi^0_{\G_1},\Pi_{\cal A}\vert_1\geq 24/100-202/c$.\hspace{0.5cm}$\Box{}$
\end{itemize}
\smallskip
 
\item \emph{Proof of Theorem \ref{main23}.}
Take the two graphs $\G_1$ and $\G_2$, $\G_1$ being the graph of Lemma \ref{finalLemma}, and  $\G_2$ the same graph without the $\stared$ vertices (the graph before the decoration construction). We have: $\E(\vert V_{\G_2}\vert)=n$, $\E(\vert V_{\G_1}\vert )=(1+c_1)n$, and
 $\E(\davg({\G_2})\vert)=d$, $\E(\vert \davg({\G_1})\vert )= \frac{d+c_1}{1+c_1}$.\hspace{0.5cm}$\Box{}$
 \item \emph{Proof of Theorem \ref{main22}.}
Take $c_1=1$, and let $\G=\langle V_{\G},E_{\G}\rangle$ be the graph constructed as in Lemma \ref{finalLemma}. 
Consider two functions $\calF_1:V_{\G}\rightarrow [0,1]$ and $\calF_2:V_{\G}\rightarrow [0,1]$. Let the function $\forall v\in V_{\G}\setminus \stared , \calF_1(v)=\calF_2(v)=0$. For any $v\in \stared$ we set $\calF_1(v)=1$ with probability $1/2+\epsilon$ and $\calF_2(v)=1$ with probability $1/2-\epsilon$. 

 Note that $\vert \calF_1-\calF_2\vert_1\geq \epsilon$, and by employing the following classical result \cite{SamplingLowerBound},
with probability $1-o(1)$ we will not be able to distinguish between $\calF_1$ and $\calF_2$.
\begin{lemma}[\cite{SamplingLowerBound}] Consider a $\left(\frac12-\epsilon,\frac12+\epsilon\right)$-biased coin (that is, a coin whose most likely outcome has probability $\frac12+\epsilon$. To determine with probability at least $1-\delta$ what is the most likely outcome of the coin, one needs at least  $\Omega(1/\epsilon^2 \log(1/\delta))$ coin flips.\end{lemma}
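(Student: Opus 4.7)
My plan is to cast the problem as binary hypothesis testing and combine Le Cam's two-point method with the multiplicative structure of the Bhattacharyya coefficient under independent repetitions. Suppose an algorithm uses $n$ coin flips and, for each of the two possible biases $p_{\pm} = 1/2 \pm \epsilon$, correctly identifies the more likely outcome with probability at least $1-\delta$. Let $P = \mathrm{Bern}(p_+)^{\otimes n}$ and $Q = \mathrm{Bern}(p_-)^{\otimes n}$, and let $T$ be the event that the algorithm outputs ``$+$''. Then $P(T) \geq 1-\delta$ and $Q(T) \leq \delta$, so by the variational characterization of total variation,
\[
\|P - Q\|_{TV} \;\geq\; P(T) - Q(T) \;\geq\; 1 - 2\delta.
\]

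The second step is to upper-bound $\|P-Q\|_{TV}$ using a divergence that tensorizes multiplicatively across the $n$ independent flips, so as to extract a $\log(1/\delta)$ factor from the $(1-2\delta)$ lower bound. The right tool is the Bhattacharyya coefficient $\rho(P,Q) := \int \sqrt{dP\,dQ}$. Combining the identity $H^2(P,Q) = 2(1-\rho(P,Q))$ (where $H$ is the Hellinger distance) with Le Cam's inequality $TV(P,Q)^2 \leq H^2(P,Q)\bigl(1 - H^2(P,Q)/4\bigr)$ yields the clean bound $TV(P,Q)^2 + \rho(P,Q)^2 \leq 1$. Using the product-measure tensorization $\rho(P,Q) = \rho(\mathrm{Bern}(p_+),\mathrm{Bern}(p_-))^n$, we therefore deduce
\[
(1-2\delta)^2 \;+\; \rho\bigl(\mathrm{Bern}(p_+),\mathrm{Bern}(p_-)\bigr)^{2n} \;\leq\; 1,
\]
so $\rho_0^{2n} \leq 4\delta(1-\delta) \leq 4\delta$ where $\rho_0 := \rho(\mathrm{Bern}(p_+),\mathrm{Bern}(p_-))$.

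A direct computation gives $\rho_0 = 2\sqrt{(1/2+\epsilon)(1/2-\epsilon)} = \sqrt{1-4\epsilon^2}$, so the inequality becomes $(1-4\epsilon^2)^n \leq 4\delta$. Taking logarithms and using $-\log(1-4\epsilon^2) \leq 8\epsilon^2$ for $\epsilon$ bounded away from $1/2$, this rearranges to $n \geq \Omega\bigl(\epsilon^{-2}\log(1/\delta)\bigr)$, matching the claim. The main conceptual obstacle is the choice of divergence: if one instead tensorizes KL (giving $D_{KL}(P\|Q) = n\cdot \Theta(\epsilon^2)$) and applies Pinsker's inequality $TV \leq \sqrt{D_{KL}/2}$, one only recovers $n = \Omega(\epsilon^{-2})$, losing the $\log(1/\delta)$ factor entirely. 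It is precisely the exponential-in-$n$ decay of the Bhattacharyya coefficient --- rather than the linear-in-$n$ growth of KL --- that matches the exponentially small admissible failure probability $\delta$. A minor additional point: if the algorithm is allowed to stop adaptively, one applies the same hypothesis-testing argument at each leaf of its decision tree and takes a worst case, so the bound persists for adaptive coin-flipping strategies as well.
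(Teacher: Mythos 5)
Your proof is correct. Note that the paper does not prove this lemma at all --- it imports it verbatim from \cite{SamplingLowerBound} (Canetti--Even--Goldreich) --- so there is no in-paper argument to compare against; what you have written is a self-contained replacement for the citation. Your route is the standard modern two-point method: reduce to testing $\mathrm{Bern}(1/2+\epsilon)^{\otimes n}$ against $\mathrm{Bern}(1/2-\epsilon)^{\otimes n}$, lower-bound the total variation by $1-2\delta$ from the success guarantee, and upper-bound it via $TV^2+\rho^2\le 1$ together with the tensorization $\rho(P^{\otimes n},Q^{\otimes n})=\rho(P,Q)^n$; the computation $\rho_0=\sqrt{1-4\epsilon^2}$ and the elementary bound $-\log(1-4\epsilon^2)\le 8\epsilon^2$ then give $n=\Omega(\epsilon^{-2}\log(1/\delta))$. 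All steps check out, and your remark about why KL-plus-Pinsker loses the $\log(1/\delta)$ factor is exactly the right diagnosis. The original reference argues more combinatorially, via direct estimates on the overlap of the two binomial distributions of the head-count; your Hellinger/Bhattacharyya argument is cleaner and generalizes immediately to other pairs of hypotheses, at the cost of invoking Le Cam's inequality as a black box. Two minor points worth tightening: (i) the conclusion $n\ge \ln(1/(4\delta))/(8\epsilon^2)$ is only an $\Omega(\epsilon^{-2}\log(1/\delta))$ bound once $\delta$ is bounded below some constant (say $\delta\le 1/8$), which is the standard and harmless regime for such statements but should be said; (ii) for adaptive stopping, the cleaner fix is not a leaf-by-leaf argument but simply padding --- if the algorithm always uses at most $n$ flips, couple it with one that observes exactly $n$ flips and ignores the unused ones, after which the fixed-$n$ argument applies verbatim.
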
  


By Lemma \ref{finalLemma}  with probability $1-o(1)$,  the expected number of $\stared$ centers will be $\frac{8q}{dt(1-o(1))}$. Thus, if $q=\omega(dt)$

$$ \cP\left(\#  \text{{ \quer }}\cap \stared  \text{\tiny{ center }}\geq 48 q/ dt\right) \leq  e^{-16 q/ dt}\leq o(1). $$

Therefore, since in order   to distinguish between $\calF_1$ and $\calF_2$ with probability at least $1-\delta$, we need at to see at least $\Omega\left(\log(1/\delta)(1/\epsilon^2)\right)$ starred centers, or equivalently $\Omega\left(dt\log(1/\delta)(1/\epsilon^2)\right)$ queries.\hspace{0.5cm}$\Box{}$
 \end{itemize}
 \section{Conclusion}
 In this paper we have studied the complexity of computing a number of functions of online graphs, such as online social networks, in terms of their average degree and their mixing time. We have obtained a tight bound for the problem of computing the average of a bounded function on the vertices of the graph (e.g., the average star rating of a movie), and a near-tight bound for the problem of sampling a close-to uniform-at-random vertex (many algorithms in the literature assume to have access to such an oracle), and lower bounds for the problems of estimating the order, and the average degree of the graphs.
 
 \smallskip
 
It will be interesting to pursue the study of  these online graphs problems in order to bridge the gap between theoretical algorithms, and applied ones. Besides the obvious questions (what are the optimal bounds  for estimating the order and the average degree of a graph?), an interesting open problem is to understand which  structural properties of online social networks could be used by algorithms to improve the complexity of the various problems that practitioners have been considering.

\section*{Acknowledgments}
The authors would like to thank Anirban Dasgupta, Ravi Kumar, Silvio Lattanzi and Tam\'as Sarl\'os for several useful discussions.

 \newpage
 
\appendix

\section{Appendix}

\subsection{Proof of Theorem~\ref{main12}}\label{app:mdub}
We employ the following Theorem due to Chung et al.\cite{ChernofMark}:
 
 \begin{theorem}\label{thmChernoffMarkof}
 Let $ \cal M$ be an ergodic Markov chain on sample space $S$ and $s_1,s_2,\dots,s_t$, $t$ steps of this chain starting from distribution $\phi$. Let $\pi$ be the stationary distribution of $\cal M$, $f:\Omega\rightarrow [0,1]$, and $X=\sum_{i=1}^{t}f(s_i)$.
 For $0\leq\delta\leq1$ we have:
 \begin{equation}\label{Eq3}
 \begin{array}{ll}
 \cP \left(  \vert X-\mu t \vert \geq \delta \mu t\right) \leq C \parallel \phi \parallel_\pi e^{-\frac{\delta^2\mu t}{72 \tmix } }\\ \text{Or for}~ t>c\tmix,\\
  \cP \left(  \vert X/t-\mu  \vert \geq \delta \right) \leq C \parallel \phi \parallel_\pi e^{-\frac{\delta^2 c}{72 } }. 
  \end{array}
 \end{equation}
where $\mu=\sum_{v\in \Omega} f(v)\pi(v)$, and $\tmix$ is the mixing time with $1/8$ 
\footnote{Note that to get $\frac{1}{8}$ close to the stationarity $\Theta(\tmix)$ steps suffices.}
precision.
 \end{theorem}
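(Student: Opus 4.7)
The plan is to prove the concentration inequality via the standard Chernoff--method combined with a spectral perturbation argument on a tilted transition operator; this is the strategy underlying the Chung--Lam--Liu--Mitzenmacher paper, which in turn builds on Gillman, Lezaud, and Kahale.

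First, I would apply the exponential Markov inequality: for any $r > 0$,
$$\cP\bigl(X \geq (1+\delta)\mu t\bigr) \;\leq\; e^{-r(1+\delta)\mu t} \cdot \E_\phi\bigl[e^{rX}\bigr].$$
The next step is to express the moment generating function as a bilinear form involving matrix powers. Letting $P$ be the transition matrix and $E_r := \mathrm{diag}(e^{rf(s)})_{s \in S}$, a direct telescoping of the joint distribution of $(s_1,\dots,s_t)$ yields
$$\E_\phi\bigl[e^{rX}\bigr] \;=\; \langle \phi,\, (P E_r)^t \mathbf{1}\rangle.$$
I would then change basis to the $\pi$-weighted inner product space $\ell^2(\pi)$: conjugation by $D_\pi^{1/2}$ produces the operator $\tilde{M}_r := D_\pi^{1/2}(PE_r)D_\pi^{-1/2}$, whose operator norm $\|\tilde{M}_r\|_\pi$ controls the MGF up to a multiplicative factor $\|\phi\|_\pi$ coming from the Cauchy--Schwarz step on $\phi$ in this new inner product; this is precisely the origin of the $\|\phi\|_\pi$ term in the theorem.

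The crux is to bound $\|\tilde{M}_r\|_\pi$ by second-order perturbation theory in $r$ around $r=0$. At $r=0$, $\tilde{M}_0$ is (for reversible chains) the symmetric transition operator, with top eigenvalue $1$ and spectral gap $\gamma$ separating it from the remaining spectrum. Expanding the logarithm of the Perron eigenvalue of $\tilde{M}_r$ in $r$, and using the gap to bound the second derivative, yields an estimate of the form
$$\|\tilde{M}_r\|_\pi \;\leq\; 1 + r\mu + \frac{c\, r^2}{\gamma} \;\leq\; \exp\!\left(r\mu + \tfrac{c r^2}{\gamma}\right).$$
For non-reversible chains one can work with the Hermitian part of the generator and pay only a modest constant, or pass through a standard symmetrization. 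Raising this to the $t$-th power and substituting back gives
$$\cP\bigl(X \geq (1+\delta)\mu t\bigr) \;\leq\; \|\phi\|_\pi \exp\!\left(-r\delta\mu t + \tfrac{c r^2 t}{\gamma}\right).$$

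Finally, I would optimize $r = \Theta(\delta\mu\gamma)$, yielding an exponent of order $-\delta^2 \mu^2 t \gamma$. Using $\mu \leq 1$ (since $f \in [0,1]$) and the classical equivalence $\gamma^{-1} = \Theta(\tmix)$, this becomes $-\delta^2 \mu t / O(\tmix)$; chasing the explicit constants through the perturbation step and through the $\tmix$-to-$\gamma$ conversion gives the stated $72\tmix$. The lower tail follows by repeating the argument with $r<0$, and a union bound yields the two-sided estimate. The main obstacle is Step 4: tracking the precise quadratic coefficient in the expansion of $\|\tilde{M}_r\|_\pi$ and showing that it scales like $1/\gamma$ (and not merely $O(1)$), which is what forces $\tmix$ into the denominator of the exponent. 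The non-reversible case is a secondary technical nuisance, resolved by symmetrizing the operator before applying perturbation theory.
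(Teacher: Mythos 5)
First, a point of reference: the paper does not prove this statement at all --- Theorem~\ref{thmChernoffMarkof} is quoted verbatim from Chung, Lam, Liu and Mitzenmacher \cite{ChernofMark} and used as a black box in Appendix~\ref{app:mdub}, so there is no internal proof to compare against. Judging your sketch on its own terms, the overall architecture (exponential Markov inequality, MGF as $\langle \phi, (PE_r)^t\mathbf{1}\rangle$, conjugation into $\ell^2(\pi)$ producing the $\parallel\phi\parallel_\pi$ factor, bound on the tilted operator norm, optimize $r$) is the right family of arguments, but it is the Gillman--Lezaud--Kahale route, not the one in \cite{ChernofMark}, and as written it does not reach the stated bound. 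The fatal step is the last one. Your perturbation estimate $\|\tilde M_r\|_\pi \le \exp\bigl(r\mu + c r^2/\gamma\bigr)$, after optimizing $r=\Theta(\delta\mu\gamma)$, yields an exponent of order $-\delta^2\mu^2 t\gamma$, and the passage ``using $\mu\le 1$ this becomes $-\delta^2\mu t/O(\tmix)$'' goes in the wrong direction: since $\mu\le 1$ we have $\mu^2\le\mu$, so $e^{-c\delta^2\mu^2 t\gamma}\ge e^{-c\delta^2\mu t\gamma}$, and the multiplicative-Chernoff form with a single power of $\mu$ in the exponent is strictly \emph{stronger} than what you have derived (dramatically so when $\mu$ is small, which is exactly the regime the theorem is designed for). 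To obtain $\mu$ rather than $\mu^2$ you must show that the quadratic coefficient in the expansion of $\log\|\tilde M_r\|_\pi$ scales like $\mu/\gamma$ rather than $1/\gamma$ --- a Bernstein-type refinement exploiting $\mathrm{Var}_\pi(f)\le\mu(1-\mu)$ for $f\in[0,1]$. That refinement is the actual technical content of the result and is missing from your outline.

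Two secondary issues. The theorem is stated for general ergodic chains and with $\tmix$ (at precision $1/8$) in the exponent; the entire point of \cite{ChernofMark} is to avoid the spectral gap by bounding the norm of $(PE_r)^{T}$ over a single block of length $T=\tmix$ directly from the $\ell^2(\pi)$-contraction that mixing provides. Your proposed fixes for non-reversibility (``Hermitian part'', ``standard symmetrization'') do not suffice in general: the gap of the symmetrized operator can be far smaller than $1/\tmix$ of the original chain, so the conversion $\gamma^{-1}=O(\tmix)$ you rely on at the end is only safe for reversible (e.g.\ lazy) chains. Since the paper only ever applies the theorem to reversible walks this is a loss of generality rather than an error, but it means your argument proves a narrower statement than the one quoted. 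Finally, ``chasing the explicit constants ... gives the stated $72\tmix$'' is asserted, not argued; with the $\mu^2$ exponent you currently have, no amount of constant-chasing will produce inequality \eqref{Eq3}.
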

 
 
 \vspace{0.3 cm}
 
 \emph{Proof of Theorem \ref{main21}}.
 Note that 
after $\tmix \D/d_{min}$ steps we will start the chain from distribution $\phi$ satisfying $\parallel \phi\parallel_{\pi}\leq 1/4$. Then employing Equation \ref{Eq3} for $t=\tmix\D/d_{min}1/\epsilon^2\log(1/\delta)$, and knowing by  Chierichetti et al. result(\cite{FlavioKumar}) that for the Markov chain $\cal M$  the stationary distribution is the uniform distribution  and mixing time is $\tmix\D/d_{min}$,
we will have:
 $$\cP \left(  \vert X/t-\mu \vert \geq \epsilon \right) \leq \frac{C}{4}  e^{-\frac{\epsilon^2\log(1/\delta)}{72 \epsilon^2 }}\leq \delta.$$

 Note that in expectation we stay at vertex $v$ for $\D/\deg(v)$ steps. Thus, using Markov inequality with high probability after $\tmix\frac{\davg}{d_{min}}1/\epsilon^2\log(1/\delta)$ queries $\cal M$ has taken $\tmix\frac{\D}{d_{min}}1/\epsilon^2\log(1/\delta)$ steps. Equivalently, considering the variables $t$ and $i$ in the pseudocode, when  $t>\tmix\frac{\davg}{d_{min}}\frac{1}{\epsilon^2}\log(\frac{1}{\delta})$, with high probability  $i>\tmix\frac{\D}{d_{min}}({1}/{\epsilon^2})\log(1/\delta)$. \hspace{8.3cm}
$\Box{}$

\subsection{Other Algorithms}\label{sec:otheralgos}

\begin{algorithm}
\begin{algorithmic}[1]
\REQUIRE Seed vertex $s\in V_{\G}$, a constant approximation of $\tmix$
\ENSURE An $\epsilon$ additive approximation of $f_{avg}$ with probability at least $1-\delta$
\STATE{Consider the simple random walk on $\G$}
\STATE{$T\leftarrow\tmix \ln \epsilon^{-1}$}
\STATE{Starting from $s$, run the chain for $T \cdot  (\davg \epsilon^{-2} \ln \delta^{-1})$ steps (in order to produce $\davg \epsilon^{-2} \ln \delta^{-1}$ close-to-independent samples) --- let $v_0 = s, v_1, v_2, \ldots$ be the states that are visited by the walk}
\STATE{$S\leftarrow0$}
 \FOR{$i=1, 2, \dots , \davg \epsilon^{-2} \ln \delta^{-1}$}
 \STATE{Flip a coin with heads probability $1/\deg(v_{T \cdot i})$}
 \STATE{If the coin comes up heads, $S\leftarrow S+{\cal F}(v_{T \cdot i})$}
\ENDFOR
\RETURN{$S /(\davg/\epsilon^2 \log(1/\delta))$}
\end{algorithmic}
\caption{\label{alg:rej}The Rejection Algorithm.}
\end{algorithm}

\textbf{The  rejection  algorithm queries $\mathbf{ \tmix \davg\log(\epsilon^{-1}) \epsilon^{-2}\log(\delta^{-1})}$ vertices.}
The proof is an application of the Chernoff bound and the result from \cite{FlavioKumar}. Note that here since we need all the samples to be taken with probability $\epsilon$ close to their probability at stationarity, we are not able to employ Theorem \ref{thmChernoffMarkof}.

\begin{algorithm}
\begin{algorithmic}[1]
\REQUIRE Seed vertex $s\in V_{\G}$, a constant approximation of $\tmix$, and a $(1\pm \epsilon)$ approximation of $\davg$
\ENSURE An $\epsilon$ additive approximation of $f_{avg}$ with probability at least $1-\delta$
\STATE{Consider the simple random walk on $\G$}
\STATE{Starting from $s$, run the chain for $\tmix \left(\ln\epsilon^{-1} + \davgs \epsilon^{-2} \ln\delta^{-1}\right)$ steps --- let $v_0 = s, v_1, v_2, \ldots$ be the states that are visited by the walk}
\STATE{$T \leftarrow \tmix\davgs \epsilon^{-2} \ln(1/\delta)$}
\FOR{$i=1, 2, \dots , T$}
\STATE{$S\leftarrow S+ \left({\cal F}(v_{i + \tmix \ln \epsilon^{-1}})/\deg(v_{i + \tmix \ln \epsilon^{-1}})\right)$}
\ENDFOR
\RETURN{$S \cdot \davg/T$}
\end{algorithmic}
\caption{\label{alg:weight}The Weighted Sampling Algorithm.}
\end{algorithm}

 \vspace{2 mm}
 
 \textbf{The weighted sampling algorithm queries $\mathbf{ \tmix{\cal D}^2 \davg \epsilon^{-2}\log(\delta^{-1})}$ vertices.}
Take $f_1:V\rightarrow [0,1]$ be $f_1(v)=\calF(v)/\deg(v)$. 
 Note that 
 $$
 \begin{array}{ll}
 \E(f_1)&=\sum_{v\in V} {\cal F}(v)\pi(v)/\deg(v)\\
 &= \sum_{v\in V}{\cal F}(v)/|E|\\
 &=(n/\vert E\vert) f_{avg}.
\end{array}$$
 We need to take $\epsilon'=\epsilon/\davg$, the rest follows from application of Theorem \ref{thmChernoffMarkof} to find an $\epsilon'$ approximation to $\E(f_1)$ and then multiply it by $\davg$.\hspace{23mm}
\medskip

\textbf{The Metropolis algorithm suggested in the literature queries $\mathbf{ \tmix{\cal D}^2 \davg \epsilon^{-2}\log(\delta^{-1})}$ vertices.} The analysis will be similar to the analysis of previous algorithms. Note that the Metropolis algorithm is not even an optimal method for taking uniform samples (Problem \ref{pr1}). 


\begin{thebibliography}{10}


\bibitem {HitMix}Y. Peres, P. Sousi, \emph{ Mixing times are hitting times of large sets.}  Journal of Theoretical Probability, Vol. 28, Issue  2, Pages 488--519, 2015.

\bibitem{FlavioKumar} 
F. Chierichetti, A. Dasgupta, R. Kumar, S. Lattanzi, T. Sarl$\rm{\acute{a}}$s,\emph{ On Sampling Nodes in a Network.}  Proceedings of the 25th International Conference on World Wide Web, Pages 471--481, 2016. 

\bibitem{Hilder} M. Hildebrand,\emph{ Random walks on random simple graphs.}
Random Structures and Algorithms, Pages 301--318, 1996.

\bibitem{FReed} N. Fountoulakis, B. A. Reed \emph{The evolution of the mixing rate of a
simple random walk on the giant component of a random graph.} Random Structures
Algorithms, Vol. 33, Pages 68--86, 2008. 

\bibitem{ChernofMark} K. M. Chung, H. Lam, Z. Liu, and M. Mitzenmacher, \emph{Chernoff-Hoeffding bounds for Markov chains:  generalized and simplified.}  29th Symposium on Theoretical Aspects of Computer Science (STACS), Pages 124--135, 2012. 

\bibitem{SamplingLowerBound} R. Canetti, G. Even, O. Goldreich, \emph{Lower Bounds for Sampling Algorithms for Estimating the Average.} Journal of
Information Processing Letters, 
Vol. 53, Issue 1, Pages 17--25, 1995. 
 
 
 \bibitem{CooperParams}    C. Cooper, T. Radzik, Y. Siantos, \emph{Estimating Network Parameters Using Random Walks.} 4th International Conference on Computational Aspects of Social Networks (CASoN), Pages 33--40, 2012.




\bibitem{KumarAvg}  A. Dasgupta, R. Kumar, T. Sarlos,	
\emph{On estimating the Average Degree.} Proceedings of the 23rd international conference on World wide web (WWW),
Pages 795--806, 2014. 


\bibitem{AvgFeige} U. Feige, \emph{On Sums of Independent Random Variables with Unbounded Variance and Estimating the Average Degree in a Graph.} SIAM Journal on Computing, Vol .35, Num. 4, Pages 964--984, 2006. 

\bibitem{AvgGoldreich} O. Goldreich, D. Ron, \emph{Approximating Average Parameters of Graphs.} 
Journal of Random Structures and Algorithms,
Vol. 32, Issue 4, Pages 473--493, 2008. 


\bibitem{Katzir} L. Katzir, E. Liberty, O. Somekh, \emph{
Estimating Sizes of Social Networks via Biased Sampling.} 
 Proceedings of the 20th international conference on World wide web (WWW), Hyderabad, India, Vol. 10, Pages 335--359, 2011. 




\bibitem{LowSize} 
V. Kanade, F. Mallmann-Trenn, V.Verdugo, \emph{How large is your graph?} 
\arx{1702.03959}

 \bibitem{Dasgupta12} {A. Dasgupta, R. Kumar, and D. Sivakumar},
 \emph {Social Sampling.}
  {Proceedings of the 18th ACM SIGKDD International Conference on Knowledge Discovery and Data Mining},
{KDD '12},
 {Beijing, China},
 Pages  {235--243},  {2012}.
 
 
 \bibitem{Ahn07}
{Y, Ahn, S.  Han, H. Kwak, S. Moon, and H. Jeong},
\emph{Analysis of Topological Characteristics of Huge Online Social Networking Services.}
{Proceedings of the 16th International Conference on World Wide Web},
{WWW '07},
 Pages {835--844}, 2007.

 
\bibitem{Mislove07}
  {A. Mislove, M. Marcon, K. P. Gummadi, Krishna P. , P. Druschel, and B. Bhattacharjee},
 \emph{Measurement and Analysis of Online Social Networks.}
 {Proceedings of the 7th ACM SIGCOMM Conference on Internet Measurement},
{IMC '07},
 {San Diego, California, USA},
Pages {29--42}, {2007}.


\bibitem{Becchetti:2010}
    L. Becchetti, P.  Boldi, C. Castillo, and A. Gionis,
    \emph{Efficient Algorithms for Large-scale Local Triangle Counting.}
    {ACM Trans. Knowl. Discov. Data},
 Pages  {13:1--13:28}, {2010}.
\bibitem{Bressan}
{M. Bressan, E. Peserico, and L. Pretto},
      \emph{Simple set cardinality estimation through random sampling.}
      {arXiv:1512.07901}, {2015}.
\bibitem{numedges}
    {T. Eden and W. Rosenbaum},
\emph{On Sampling Edges Almost Uniformly}.
 {arXiv:1706.09748}, {2017}.


\bibitem{LKF05}
       J. Leskovec, J. Kleinberg,  and C. Faloutsos,
           \emph{        Graphs over Time: Densification Laws, Shrinking Diameters and Possible Explanations},
        {Proceedings of the Eleventh ACM SIGKDD International Conference on Knowledge Discovery in Data Mining}, {KDD '05},
{Chicago, Illinois, USA},
        Pages {177--187}, 2005.
  
\bibitem{BenjRandomGraph}
{I. Benjamini, G. Kozma,  and N. Wormald},
   \emph{The Mixing Time of the Giant Component of a Random Graph},
        {Random Structures \& Algorithms},
              Vol. {45},  Pages  {383--407},  {2014}.

\bibitem{Leskovec09}
       {J. Leskovec, K. J. Lang, Kevin, A.  Dasgupta, Anirban and M. W. Mahoney},
       \emph{ Community Structure in Large Networks: Natural Cluster Sizes and the Absence of Large Well-Defined Clusters.}
Internet Mathematics,
  Internet Math.,
        Num. {1}, Vol. 6,
        Pages {29--123},
        Vol. {6},
   {2009}.


\bibitem{Seshadhri14}
        Acmid = {2913420},
{C. Seshadhri, A. Pinar, Ali and T. G. Kolda},
  \emph{Wedge Sampling for Computing Clustering Coefficients and Triangle Counts on Large Graphs.} {Statistical Analysis and Data Mining},
        Num. {4}, Vol. 7, 
        Pages {294--307},
 {2014}.

\bibitem{Gjoka:2010}
      M. Gjoka, M. Kurant, C. T. Butts,  and A. Markopoulou, 
      \emph{Walking in Facebook: A Case Study of Unbiased Sampling of OSNs.}
    {Proceedings of the 29th Conference on Information Communications},
       {San Diego, California, USA},
        Pages {2498--2506},
        {2010}.
  

\bibitem{Clustering}
        {S. T. Schank and W. Dorothea},
        \emph{Approximating Clustering Coefficient and Transitivity.}
{Journal of Graph Algorithms and Applications},
     Vol. 9.    Pages  {265--275},
{2005}.

\bibitem{Cycle}
N. Alon,
R.  Yuster,
and U.  Zwick,
\emph{Finding and counting given length cycles.}
Algorithmica ,
Vol. 17,
Num. 3,
Pages 209--223,
1997.


\bibitem{Hardiman}
S.J. Hardiman, P. Richmond,  and S. Hutzler,
\emph{Calculating statistics of complex networks through random walks with an application to the on-line social network Bebo.}
The European Physical Journal B,
Vol. {71},
Num. {4},
Pages {611},
{2009}.




\end{thebibliography}
\end{document}